\newcommand{\rand}{\stackrel{\$}\leftarrow}
\newcommand{\api}{\stackrel{\textit{def}}=}
\long\def\/*#1*/{}
\renewcommand{\arraystretch}{1.5}
\newcolumntype{P}[1]{>{\centering\arraybackslash}p{#1}}
\newcolumntype{M}[1]{>{\centering\arraybackslash}m{#1}}
\theoremstyle{definition}
\newtheorem{theorem}{Theorem}[section]
\newtheorem{lemma}{Lemma}
\newtheorem{claim}{Claim}
\newcommand{\la}{\leftarrow}
\newcommand{\ra}{\rightarrow}
\newcommand{\nprp}{\mbox{$\widetilde{\mbox{\rm\small prp}}$}}
\newcommand{\nbits}{\{0,1\}^n}
\newcommand{\mbits}{\{0,1\}^*}
\newcommand{\tbits}{\{0,1\}^{32}}
\def\cleardoublepage{\clearpage\if@twoside \ifodd\c@page\else%
    \hbox{}%
    \thispagestyle{empty}
    \if@twocolumn\hbox{}\newpage\fi\fi\fi} 
\begin{document}
\pagenumbering{roman}
\begin{titlepage}
\begin {center}
\vspace*{0.3cm}

\centering
{\huge\bf 
\begin{center}
Security of XCB and HCTR
\end{center}
}
~\\
\vspace {0.35in}
\textbf{\textsc{\large Dissertation Submitted In Partial Fulfilment Of The Requirements For The Degree Of}}\\
\sf
\vspace {0.3in}
{\bf Master of Technology}\\
{\bf in}\\
{\bf Computer Science}\\
\vspace {.3in}
\large
{by}\\
\vspace{.2in}
{\Large \bf Manish Kumar} \\
{\small [ Roll No: CS-1616 ]} \\
\vspace {.4in}
{Under the Guidance of}\\
\vspace{.3in}
{\Large \bf Dr. Debrup Chakraborty} \\
{Associate Professor \& Head}\\
{Cryptology and Security Research Unit (CSRU)}\\
\vspace {0.2in}

\begin{figure}[htbp]
{\centering \resizebox*{!}{3.5cm}{\includegraphics{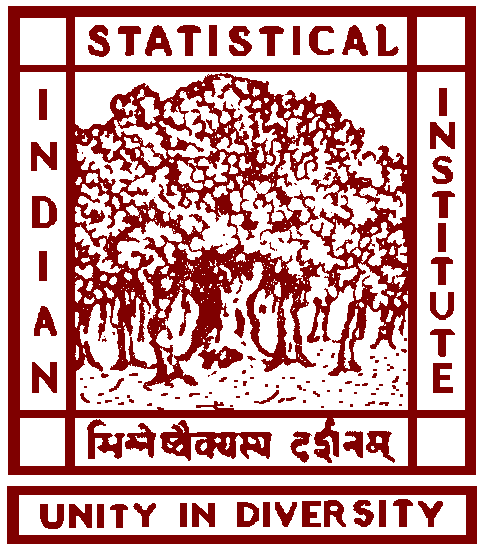}} \par}
\end{figure}
\vspace {0.2in}
{
{\bf \large Indian Statistical Institute}\\
{\bf Kolkata-700108, India}\\
\vspace {0.3in}

{\bf July 2018}
}
\end {center}
\end{titlepage}

\chapter*{}
\thispagestyle{empty}
\begin{center}
\vspace*{-.6in}
\parskip=0.1in
{\Large \bf Declaration} 
\end{center}
\vspace{0.4in}
\hspace{0.5in}


%

\noindent
I hereby declare that the dissertation 
report entitled {\bf ``Security of XCB and HCTR''} submitted to Indian Statistical Institute, Kolkata, is a \textit{bona fide} record of work carried out in partial fulfilment for the award of the degree of {\bf Master of Technology in Computer Science}. The work has been carried out under the guidance of \textbf{Dr. Debrup Chakraborty}, Associate Professor, CSRU, Indian Statistical Institute, Kolkata. \\

\vspace{2.5mm}
\noindent I further declare that this work is original, composed by myself. The work contained herein is my own except where stated otherwise by reference or acknowledgement, and that this work has not been submitted to any other institution for award of any other degree or professional qualification.

\vspace{1.7in}
\noindent
\begin{minipage}[t]{5cm}
\begin{flushleft}
Place : Kolkata\\
Date : \today\\
\end{flushleft}
\end{minipage}
\hfill
\begin{minipage}[t]{9cm}
\begin{flushright}
{\bf Manish Kumar}\\
Roll No: CS-1616\\
Indian Statistical Institute\\
Kolkata - 700108 , India. 
\end{flushright}
\end{minipage}

\newpage

\chapter*{}
{\thispagestyle{empty}
\begin{center}
\vspace*{-.6in}
{\Large \bf CERTIFICATE} 
\end{center}
\vspace{0.4in}
\hspace{0.5in}

{\noindent
This is to certify that the dissertation entitled {\bf ``Security of XCB and HCTR''}
submitted by {\bf Manish Kumar} to Indian Statistical Institute, Kolkata, 
in partial fulfilment for the award of the degree of 
{\bf Master of Technology in Computer Science} is a \textit{bona fide} record of work carried out by 
him under my supervision and guidance. The dissertation has fulfilled all the 
requirements as per the regulations of this institute and, in my opinion, has 
reached the standard needed for submission.\\\\\\\\\\\\

\begin{tabbing}
a\=aa\=aaaa\=aaaaaaaaaaaaaaaaaaaaaaaaaaaaaaaaaaaa\= \kill
\>\rule[.2em]{25em}{0.5pt}~\\
\>{\large \bf Debrup Chakraborty }\\
\>Associate Professor \& Head,\\
\>Cryptology and Security Research Unit,\\
\>Indian Statistical Institute,\\
\>Kolkata-700108, India.\\
\end {tabbing}
}
}

\newpage

\thispagestyle{empty}
\vspace*{5.5cm}

\begin{flushleft}
{ \em ``There exist attackers who follow non-violence.''}
\end{flushleft}

\newpage

\chapter*{}
\thispagestyle{empty}
\begin{center}
\vspace*{-.6in}
\parskip=0.1in
{\Large \bf Acknowledgements} 
\end{center}
\vspace{0.4in}
\hspace{0.5in}

\noindent
I would like to take this opportunity to thank people who are behind my success in this project.

\vspace{2.5mm}
\noindent
\textit{Prima facie}, I would like to thank my parents, family members and teachers who supported me in every walk of my life.

\vspace{2.5mm}
\noindent
I would like to show my highest gratitude to my adviser, \textit{Prof. Debrup Chakraborty} of Cryptology and Security Research Unit, for his guidance
and continuous support and encouragement. His zeal and method of teaching are highly motivating.

\vspace{2.5mm}
\noindent
I would also like to thank \textit{A V S D Bharadwaj}, student research group of CSRU and ``M.Tech. Crypto Cluster'' for their valuable suggestions and discussions. 

\vspace{2.5mm}
\noindent
My deepest thanks to all the professor of Indian Statistical Institute, for their valuable suggestions which added an important dimension to my research work.

\vspace{2.5mm}
\noindent
Last but not the least, I would like to thank all of my friends for their help. I would also like to thank all those, 
whom I have missed out from the above list.

\vspace{.3in}
\begin{flushright}
{\bf Manish Kumar}\\
Indian Statistical Institute\\
Kolkata - 700108 , India. 
\end{flushright}

\thispagestyle{empty}
\vspace*{5.5cm}
\begin{flushright}

{\Large \em To my family and supervisor}

\end{flushright}

\newpage

\chapter*{}
\vspace*{-.6in}
\begin{center}
\Large \bf {Abstract}
\end{center}
\vspace{0.4in}

\par Tweakable Enciphering Scheme (TES) is a length preserving scheme which provides confidentiality and admissible integrity. XCB (Extended Code Book) is a TES which was introduced in 2004. In 2007, it was modified and security bound was provided. Later, these two versions were referred to as XCBv1 and XCBv2 respectively. XCBv2 was proposed as the IEEE-std 1619.2 2010 for encryption of sector oriented storage media. In 2013, first time Security bound of XCBv1 was given and XCBv2's security bound was enhanced. A constant of $2^{22}$ appears in the security bounds of the XCBv1 and XCBv2.\\ 

\par We showed that this constant of $2^{22}$ can be reduced to $2^{5}$. Further, we modified the XCB (MXCB) scheme such that it gives better security bound compared to the present XCB scheme. We also analysed some weak keys attack on XCB and a type of TES known as HCTR (proposed in 2005). We performed distinguishing attack and the hash key recovery attack on HCTR. Next we analysed the dependency of the two different keys in HCTR.\\

\par \textbf{Keywords:} Disk encryption $\cdot$ IEEE-std 1619.2 2010 $\cdot$ Tweakable enciphering scheme $\cdot$ XCB $\cdot$ MXCB $\cdot$ Weak keys $\cdot$ HCTR.

\newpage
\tableofcontents
\newpage

\listoffigures

\listoftables

\mainmatter
	
\chapter{Introduction}

\par Day-by-day we are becoming more reliable on the data which can be personal, organisational, top secret, anything or everything. Often we come across the news of data breach and  fraud. Recently (in 2018), Cambridge Analytica was in news due to data breach of facebook users. In India, Unique Identification Authority of India (UIDAI) which is responsible for AADHAR scheme is facing data breach on daily basis. In 2012, we generated approx 2.5 exabytes of data every day. Majority ($90 \%$) of all the present data is generated in last few years. As data is asset of the $21^{st}$ century and it is on continuous threat, question arises whether we can make the data secure or not? If yes then up-to what extent? As a wise man said, ``The hardest thing of all is to find a black cat in a dark room, especially if there is no cat''. Providing security is more or less like that.

\par At any instant, data can be in two states - stored or in transit.
Generally, we are interested in the confidentiality and integrity of the data in both of these states and the techniques involved in these two different scenarios are different. In most schemes which provide integrity/confidentiality in a strong sense
a length expansion of the original data takes place, i.e., the transformed data occupies more space than the original data. Though such expansion 
can be easily tolerated for most scenarios, there are specific application areas where such length expansion cannot be tolerated. One such area is the application of low level disk encryption.  
Disk encryption ensures confidentiality and integrity of the stored data. Even if the hardware containing the disk is stolen the data stored in
it would be unreadable to the adversary, also the adversary would be unable to change the contents of the disk in a meaningful way.

In low level disk encryption, the encryption/decryption algorithm resides on the disk controller and sees the disk as a bare collection 
of sectors. It encrypts the data before storing it into the sector and decrypts it after reading a sector and before sending it to high level 
applications. As each sector is of fixed length (4096 bytes, in modern disks), so length expansion after encryption cannot be tolerated.
A well accepted solution for the problem of disk encryption is a cryptographic object called a Tweakable Enciphering Scheme (TES).

\par A TES (Wide block encryption) is a block-cipher mode of operation with the following properties:
\begin{enumerate}
	\item {\em Length preserving:} The length of the cipher text is same as the length of the plain text.
	\item {\em Ciphertext Variability:} A TES takes as input a key, a plaintext and a special quantity called the
	{\em tweak}. If the same message is encrypted with the same key but different tweaks then unrelated
	ciphertexts are obtained. In the application of disk encryption, sector addresses are considered as tweaks. This
	property ensures that even if the same data is stored in two different sectors the cipher texts would look different.
	\item {\em Confidentiality:} The cipher texts produced by a TES are indistinguishable from random strings to any  computationally bounded adversary.
	Which means that any practical adversary would see the cipher texts as random strings and would be thus unable to derive any
	information regarding the plain text which produced this cipher text.
	\item {\em Integrity:} If a single bit of a valid ciphertext produced by a TES is changed, then this altered ciphertext on decryption will produce
	a random looking plaintext. This property ensures that an adversary would not be able to alter a ciphertext so that it gets decrypted to something meaningful.
\end{enumerate}

In designing a TES, we have certain goals. We can list them as follows:
\begin{enumerate}
	\item First and foremost goal is TES's correctness i.e. decryption should undo encryption for every message in the message space.
	\item Our next aim, TES should be as efficient as possible. Some of the important dimensions of efficiency expected of a TES are the following:
	\begin{itemize}
		\item Less running time for encryption/decryption.
		\item Small circuit area when implemented in hardware.
		\item Low power consumption when implemented in power constrained devices.
		\item Small code size and low memory usage in memory constrained software implementations.
	\end{itemize} 
	\item In TES, we are using tweak for getting variability in the output for the same input. So to change the tweak should be cheaper than changing the key. In most of the block cipher changing the encryption key is relatively expensive since there is need to perform ``key setup'' operation.
	\item TES should be secure i.e. even the adversary has control of the tweak input than also the scheme should be secure.
\end{enumerate}

Designing efficient TESs which provide the required security in provable terms is a challenging problem.
In the last two decades there has been some intense work in designing and proving the security of TES.
Some of the existing constructions are PEP \cite{chakraborty2006new}, HCTR\cite{wang2005hctr}, HCH\cite{chakraborty2006hch}, TET \cite{halevi2007invertible}, HEH \cite{sarkar2009efficient}, CMC\cite{halevi2003tweakable}, XCB\cite{mcgrew2004extended,mcgrew2007security} and EME\cite{halevi2004parallelizable}.
TES has been standardised because of its practical application in disk encryption.

\section{XCB and HCTR}
In this dissertation we study two TES called XCB and HCTR.

\par XCB (Extended Code Book) and HCTR are Hash-Counter-Hash Tweakable Enciphering Schemes i.e. both the schemes use first a layer of universal hash function then a counter (CTR) mode and then again another layer of universal hash function for encryption and decryption. The universal hash functions used in both constructions  is a variant of polynomial evaluation hash \cite{wegman1981new}. Let us consider a message $M$ which is parsed into $m$ blocks $M_1,M_2,\ldots,M_m$, each of length $n$ bits. To hash $M$ using a polynomial hash with
a key $n$ bit key $H$, the polynomial $h_H(M) = \sum_i M_iH^i$ is computed. Variants of this  polynomial evaluation hash  has been widely used to construct message authentication codes (MAC) \cite{brassard1983computationally,mcgrew2004security,wegman1981new}, authenticated encryption (AE), TES \cite{mcgrew2004extended,chakraborty2006hch,wang2005hctr} and other cryptographic schemes. CTR mode uses the block cipher to generate the key stream used in the message encryption: $E_K(S_i), i = 1, 2, \cdot \cdot \cdot $, where $K$ is the key of block cipher and $S_i$ is the number generated by a counter. The main difference between HCTR and XCB is: HCTR has two master keys, one for counter mode and other for universal hash function while XCB has only one master key from that other keys are generated. XCB and HCTR use different variants of the Counter mode and the polynomial evaluation hash \cite{sun2015weak}.\\

\par In 2004, McGraw and Fluherer proposed Tweakable Enciphering Scheme (TES) named as XCB in \cite{mcgrew2004extended} without providing a proof. Later in 2007, they made changes in original construction and proved security of the updated construction in \cite{mcgrew2007security}. Authors claim that the changes were made for the improvement of performance of XCB and make it easier to analyse. Later Chakraborty, Hernandez-Jimenez and Sarkar \cite{chakraborty} did a detailed analysis of two versions of XCB as described in \cite{mcgrew2004extended} and \cite{mcgrew2007security}. The study in \cite{chakraborty} names the version of XCB in \cite{mcgrew2004extended} as XCBv1 and the one in \cite{mcgrew2007security} as XCBv2, we will also follow the same nomenclature. The analysis in \cite{chakraborty} concludes that the security claims regarding XCBv2 as presented in \cite{mcgrew2007security} are largely erroneous. XCBv2 is completely insecure for certain types of messages, in particular, there is an easy distinguishing attack on XCBv2 if it is used on messages whose length is not a multiple of the block length $n$ of the underlying block cipher. Though XCBv2 is secure for other messages, the proof and the security bound was shown to be incorrect. In \cite{chakraborty} a correct security bound for XCBv2 (message for which it is secure) was derived and also a proof for XCBv1 was provided. That proof was based on the analyses done in \cite{iwata2012breaking}. In \cite{iwata2012breaking} the security bounds of an authenticated encryption scheme called GCM were analyzed, as GCM and XCB shares almost the same hash function. Hence, the techniques used and analyze GCM in \cite{iwata2012breaking} could be adopted to analyze XCB in \cite{chakraborty}. Further, analysis of the GCM bound was done in \cite{Iwata}. We use the analyses done in \cite{Iwata} to give an improved security bound on XCBv1 and XCBv2. We also modify the XCB (MXCB say) and give its security bound. Further, we compare the improved security bound and MXCB with some existing TES mode having parameter of practical value followed by some weak keys analysis on XCB.

We also do some analysis on HCTR. It was proposed by Wang, Feng and Wu in 2005. It is a mode of operation which provides a tweakable strong pseudorandom permutation \cite{wang2005hctr}. We show how the hash function is insecure. We perform distinguishing attack and the hash key recovery attack on HCTR. Next we analyse the dependency of the two different keys in HCTR. In particular, we analyse the following scenario. Suppose HCTR with keys $K$ and $h$ has been used for some time, and $K$ gets compromised. We show that only changing $K$ would rise to a completely insecure scheme.

\section{Outline of dissertation} In \autoref{chap : Preliminareis}, we discuss and formalise the notion of security for Tweakable Enciphering Scheme. In \autoref{chap : XCB}, we formalise the XCB and prove a lemma. Then gives the security proof of XCBv1 followed by security bound for MXCB and comparison with some existing TES. Also, weak keys analysis on XCB. In \autoref{chap : HCTR}, we discuss the construction of HCTR, distinguishing and key recovery attack on the existing HCTR scheme followed by key dependency of the master keys . In \autoref{chap : Conclusion}, we conclude the discussion.
\chapter{Preliminaries}\label{chap : Preliminareis}

\par Following are the notation which we will use in subsequent chapter.
\section{Notation}
The set of all $n$-bit strings will be denoted by $\{0, 1\}^n$. For a binary string $X$, $|X|$ will represent the size of the string in bits. We will use $X\|Y$ for concatenating binary string $X$ and $Y$; for $r \leq |X|,$ $r$ left most and $r$ right most bits of $X$ would be denoted by $\text{msb}_r(X)$ and $\text{lsb}_r(X)$ respectively. By int($X$) we denote the integer represented by the binary string $X$, ${\sf bin}_n(i)$ will denote the $n$-bit binary representation of $i$, where the leftmost bit is the most significant bit and $i$ is non-negative such that $i \leq 2^n-1$. For $X, Y \in \{0, 1\}^n, X \oplus Y$ and $XY$ will respectively denote addition and multiplication in $GF (2^n)$. We denote $parse_n(X)$ by $(X_1, X_2,\dotsc,X_m)$ where each $X_i$ is of $n$-bit except last one while $1\leq|X_m|\leq n$ and cardinality of $X$ would be denoted by $\#X.$ 

\par In standard of XCB, field $GF (2^{128})$ is represented by the irreducible polynomial $x^{128} + x^7 + x^2 + x + 1$. Note that selection of irreducible polynomial doesn't affect the security of scheme. Therefore, proofs and attacks are irrespective of the irreducible polynomial.

\section{Tweakable Enciphering Schemes (TES)} A Tweakable Enciphering Scheme is a pair of functions $(\textbf{E, D})$ where $\textbf{E}$ and $\textbf{D}$ are the encryption and decryption functions respectively of the enciphering scheme. Here, encryption $\textbf{E} : \mathcal{K} \times \mathcal{T} \times \mathcal{M} \ra  \mathcal{M}$ and decryption $\textbf{D} : \mathcal{K} \times \mathcal{T} \times \mathcal{M} \ra  \mathcal{M},$ where $\mathcal{K}$ and $\mathcal{T}$ are non-empty sets, and they denote the key space and the tweak space respectively. The message and the cipher space $\mathcal{M} \subseteq \bigcup_{i\ge 1}\{0, 1\}^i.$
We will denote $\textbf{E}(K, T,\cdot)$ by $\textbf{E}^T_K(\cdot)$ and $\textbf{D}(K, T,\cdot)$ by $\textbf{D}^T_K(\cdot).$
Encryption and decryption are length preserving i.e. for every $K \in \mathcal{K}$, $M \in \mathcal{M}$ and $T \in \mathcal{T}$ such that $|\textbf{E}_K^T(X)| = |X|.$ For the correction purpose, $X = \textbf{D}^T_K(Y)$ if and only if  $\textbf{E}^T_K(X) = Y$ where $\textbf{D} = \textbf{E}^{-1}.$


\section{Security of TES}
Discussion of this section is based on \cite{halevi2004parallelizable}. An $n$-bit block-cipher is a function $E:\mathcal{K} \times \nbits \rightarrow \nbits$, where $\mathcal{K} \neq \phi$ is the key space for any key $K \in \mathcal{K}$ and $E(K,\cdot)$ is a permutation. 

\par An adversary $A$ is a probabilistic algorithm which can access two oracles and gives output either 0 or 1. The notation $A^{\mathcal{O}_1,\mathcal{O}_2}\Rightarrow 1$ denotes the events that the adversary $A,$ interacts with the oracles $\mathcal{O}_1, \mathcal{O}_2$ and finally output the bit 1. Event of choosing $X$ uniformly at random from the finite set $S$ is represented by $X \rand S$.
\par Let Perm($n$) denotes the set of all permutation on $\nbits$. The advantage of $A$ in breaking the strong pseudo randomness  of $E$ is defined as
\begin{multline*}
\textbf{Adv}^{\pm prp}_E (A) = \bigg|Pr\bigg[K \rand \mathcal{K} : A^{E_K(\cdot),E_K^{-1}(\cdot)}\Rightarrow 1\bigg]\\ - Pr\bigg[\pi \rand \text{Perm}(n): A^{\pi(\cdot), \pi^{-1}(\cdot)}\Rightarrow 1\bigg]\bigg|. 
\end{multline*}

\par Let $Perm^{\mathcal{T}}(\mathcal{M})$ denote the set of all functions $\boldsymbol{\pi}: \mathcal{T} \times \mathcal{M} \rightarrow \mathcal{M}$ 
where $\boldsymbol{\pi}(T,\cdot)$ is a length preserving  permutation on $\mathcal{M}$ and $\boldsymbol{\pi} \in Perm^{\mathcal{T}}(\mathcal{M})$ is known as indexed permutation. For a Tweakable Encipher Scheme $\textbf{E}: \mathcal{K} \times \mathcal{T} \times \mathcal{M} \rightarrow \mathcal{M}$, we define the advantage an adversary $A$ in distinguishing $\textbf{E}$ and its inverse from a random tweak indexed permutation and its inverse  in the following way:
\begin{multline}\label{sprp}
\textbf{Adv}^{\pm \widetilde{prp}}_\textbf{E} (A) = \bigg|Pr\bigg[K \rand \mathcal{K} : A^{\textbf{E}_K(\cdot, \cdot),\textbf{E}_K^{-1}(\cdot,\cdot)}\Rightarrow 1\bigg]\\ - Pr\bigg[\boldsymbol{\pi} \rand Perm^{\mathcal{T}}(\mathcal{M}): A^{\boldsymbol{\pi}(\cdot, \cdot), \pi^{-1}(\cdot, \cdot)}\Rightarrow 1\bigg]\bigg|. 
\end{multline}

\par 
We define $\textbf{Adv}^{\pm \widetilde{prp}}_\textbf{E} (q, \sigma_n)$
by $max_A\textbf{Adv}^{\pm \widetilde{prp}}_\textbf{E} (A)$ where maximum is taken over all adversaries which makes at most $q$ queries having at most $\sigma_n$ many blocks. For a computational advantage we define $\textbf{Adv}^{\pm \widetilde{prp}}_\textbf{E} (q, \sigma_n, t)$ by $max_A\textbf{Adv}^{\pm \widetilde{prp}}_\textbf{E} (A)$. In
addition to the previous restrictions on $A$, he can run in time at most $t$.

\chapter{Security of XCB}\label{chap : XCB}

\par In 2004, McGraw and Fluhrer proposed Tweakable Enciphering Scheme (TES) named as XCB in \cite{mcgrew2004extended} without providing a proof. Later in 2007, they made changes in original construction and proved security of the updated construction in \cite{mcgrew2007security}. Authors claim that the changes were made for the improvement of performance of XCB and make it easier to analyse. Later Chakraborty, Hernandez-Jimenez and Sarkar \cite{chakraborty} did a detailed analysis of two versions of XCB as described in \cite{mcgrew2004extended} and \cite{mcgrew2007security}. The study in \cite{chakraborty} names the version of XCB in \cite{mcgrew2004extended} as XCBv1 and the one in \cite{mcgrew2007security} as XCBv2, we will also follow the same nomenclature. The analysis in \cite{chakraborty} concludes that the security claims regarding XCBv2 as presented in \cite{mcgrew2007security} are largely erroneous. XCBv2 is completely insecure for certain types of messages, in particular, there is an easy distinguishing attack on XCBv2 if it is used on messages whose length is not a multiple of the block length $n$ of the underlying block cipher. Though XCBv2 is secure for other messages, the proof and the security bound was shown to be incorrect. In \cite{chakraborty} a correct security bound for XCBv2 (message for which it is secure) was derived and also a proof for XCBv1 was provided. That proof was based on the analyses done in \cite{iwata2012breaking}. In \cite{iwata2012breaking} the security bounds of an authenticated encryption scheme called GCM were analyzed, as GCM and XCB shares almost the same hash function. Hence, the techniques used and analyze GCM in \cite{iwata2012breaking} could be adopted to analyze XCB in \cite{chakraborty}. Further, analysis of the GCM bound was done in \cite{Iwata}. We use the analyses done in \cite{Iwata} to give an improved security bound on XCBv1 and XCBv2.

\par In this chapter, we give the improved security bounds on XCBv1 and XCBv2 (with full block). Further, we modify the XCB (MXCB say) and give its security bound. Also, we compare the improved security bound and MXCB with some existing TES mode having parameter of practical value. After that in last section of the chapter, we show some weak keys analysis on XCB. 
\section{Description of XCB}
\begin{figure}[!ht]
	\includegraphics[width=0.85\textwidth]{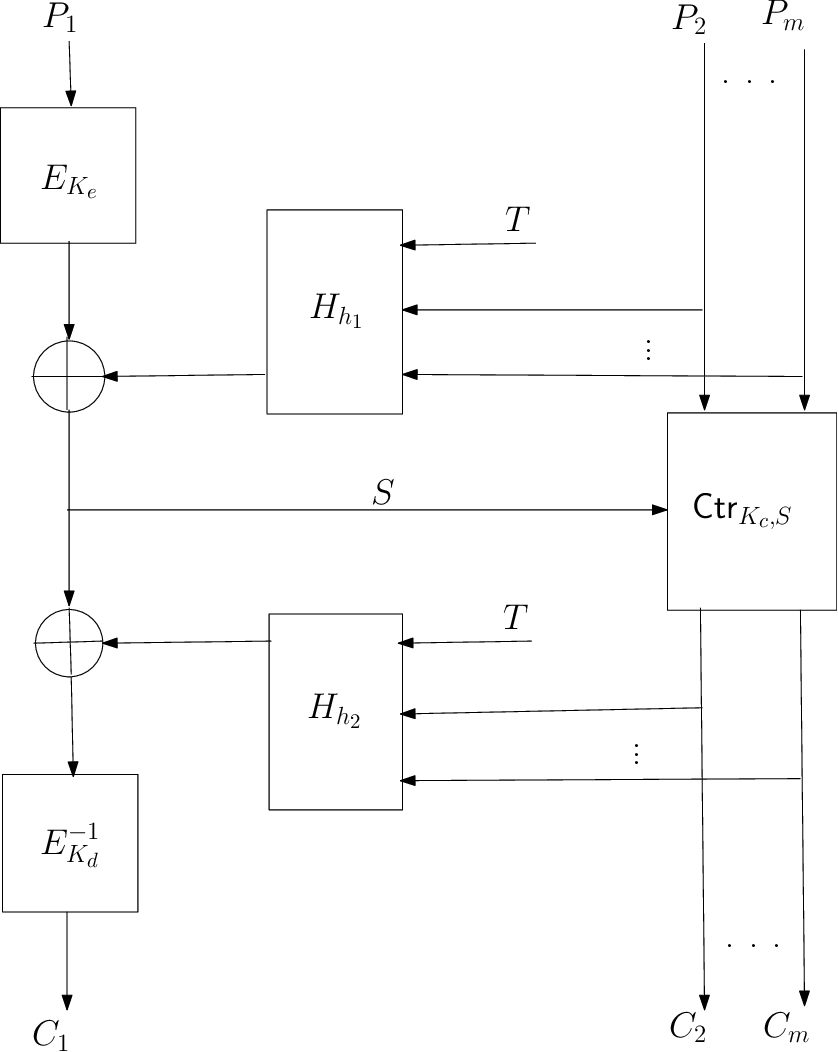}
	\centering
	\caption{\label{cons:xcbv1} Encryption of XCBv1}
\end{figure}

XCB is hash-counter-hash scheme which use hash function and counter mode as the basic building blocks of the scheme. The construction of XCBv1 and XCBv2 are shown in the \autoref{cons:xcbv1} and \autoref{cons:xcbv2} respectively, and encryption algorithms for XCBv1 and XCBv2 are shown in the \autoref{fig : XCBv1_v2}.\\

Following definition of hash function and {\sf Ctr} mode are as defined in \cite{chakraborty}.
\par \textbf{Hash function} $H : \nbits \times \mathcal{X} \times \mathcal{Y} \rightarrow \nbits $, where $\mathcal{X}, \mathcal{Y}$ are non-empty subsets of $\mbits.$ For $ T \in \mathcal{Y}$ and $ X \in \mathcal{X} $, we write $ H_h(X, T)$ instead of $H(h, X, T)$. The hash function $H$ is defined as

\begin{equation*} 
H_h(X, T) = X_1 h^{m+p+1} \oplus X_2 h^{m+p} \oplus \dotso \oplus  {\sf pad}(X_m)h^{p+2} \oplus T_1 h^{p+1} 
\end{equation*}\begin{equation}\label{1}
\quad \quad \quad \hspace{2.2cm}	\oplus T_2 h^p \oplus \dotso \oplus \text{\sf pad}(T_p)h^2 \oplus (\text{\sf bin}_{\frac{n}{2}}(|X|)\|\text{\sf bin}_{\frac{n}{2}}(|T|))h, 
\end{equation}
where $(X_1, X_2,\dotsc,X_m) = parse_n(X), (T_1, T_2,\dotsc, T_p) = parse_n(T)$. The ${\sf pad}$ function is defined as ${\sf pad}(X_m) = X_m \|0^r$ where $r = n -|X_m|$. Thus, $|{\sf pad}(X_m)| = n$.

\textbf{Counter mode:} ${\sf Ctr}$ with key $K$, counter value $S$ and message $A_1, \dots, A_m$ is defined as:

\centerline{$\text{{\sf Ctr}}_{K,S}(A_1, \dotsc, A_m) = (A_1 \oplus E_K(\text{inc}^0(S)),\dotsc, A_m \oplus E_K(\text{inc}^{m-1}(S)).$}

\par If the last block $A_m$ is incomplete then the quantity $A_m \oplus E_K(\text{inc}^{m-1}(S))$ is replaced by the quantity $A_m \oplus \text{\sf drop}_r(E_K(\text{inc}^{m-1}(S))$ where $r = n -|A_m|$ and $\text{\sf drop}_r(E_K(\text{inc}^{m-1}(S))$ is the first $(n-r)$ bits of $E_K(\text{inc}^{m-1}(S)$. In the definition of {\sf Ctr}, for a bit string $X \in \nbits, \text{inc}(X)$ treats the last significant 32 bits of $X$ as a non-negative integer and increment this value modulo $2^{32}$ i.e.\\
\centerline{$\text{inc}(X) = \text{msb}_{n-32}(X)\| \text{\sf bin}_{32}(\text{int} (\text{lsb}_{32}(X)) + 1 \mod2^{32})$.}

For $r \geq 0$, we write $\text{inc}^r(X)$  to denote the $r$ times iterative applications of \text{inc} on $X$. We  use the convention that $\text{inc}^0(X) = X$. For this specific structure of \text{inc} both XCBv1 and XCBv2 can only be used with block ciphers where the block length $n \geq 32,$ which does not amount to a practical constraint.

\par XCBv1 and XCBv2 has length constraints with respect to plaintext and tweak as $n \le |P| \le 2^{39}$ and $0 \le |T| \le 2^{39}$ respectively. XCBv2 is specified in the standard IEEE 1619.2 2010. In the standard, AES is fixed as the block cipher and message length is always multiple of 8 bits. \\

\begin{figure}[!ht]
	\includegraphics[width=0.75\textwidth]{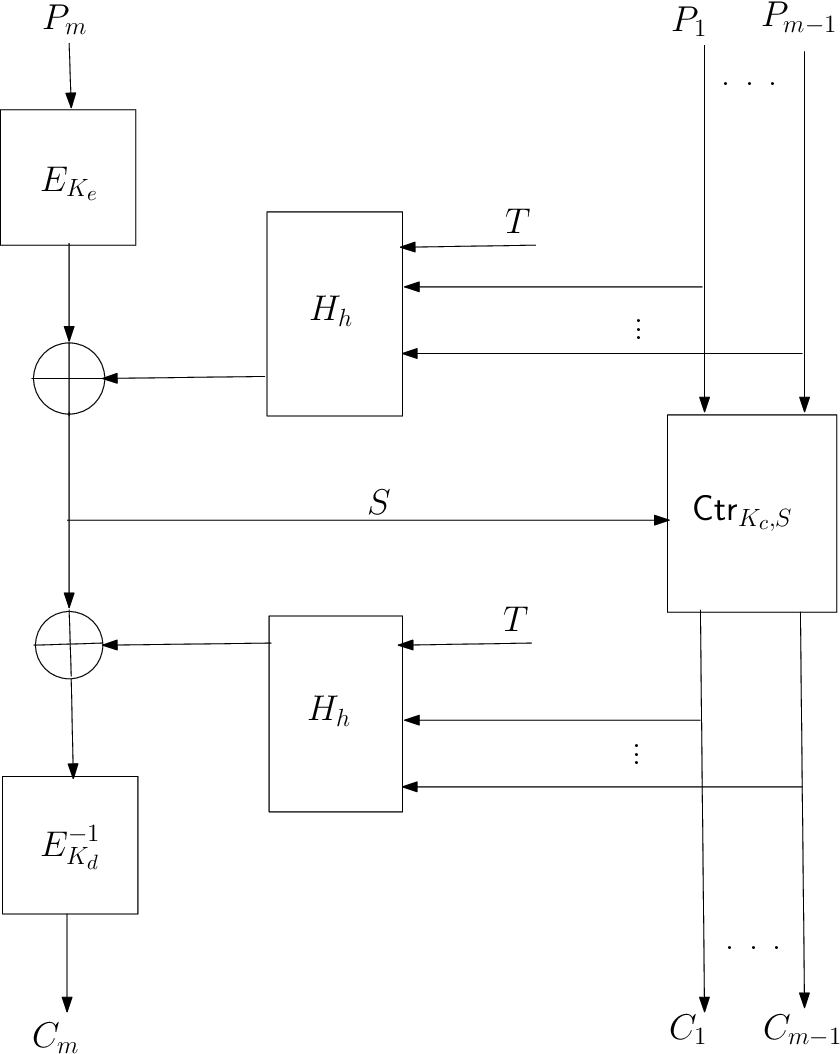}
	\centering
	\caption{\label{cons:xcbv2} Encryption of XCBv2}
\end{figure}

\begin{figure}[h]
	\begin{center}
		{\scriptsize
			\begin{tabular}{|c|}
				\hline\\
				\begin{minipage}{370pt}
					Encryption under XCBv1 : $\textbf{E}_K^T(P)$\\\\
					0.  $(P_1,\dotsc, P_m) \la \text{parse}_n P$\\
					1.  $h_1 \la E_K(0^{n-3}\|001)$\\
					2.  $h_2 \la E_K(0^{n-3}\|011)$\\
					3.  $K_e \la E_K(0^n)$\\
					4.  $K_d \la E_K(0^{n-3}\|100)$\\
					5.  $K_c \la E_K(0^{n-3}\|010)$\\
					6.  $CC \la E_{K_e}(P_1)$\\
					7.  $S \la CC \oplus H_{h_1}(P_2 \|\dotso \| P_{m-1}\|P_{m}, T)$\\
					8.  $(C_2, \dotsc,C_m) \la \text{{\sf Ctr}}_{K_{c,S}}(P_2,\dotsc, P_m) $\\
					9.  $MM \la S \oplus H_{h_2}(C_2 \|\dotso\| C_{m-1}\|C_m, T)$\\
					10. $C_1 \la E^{-1}_{K_d}(MM)$\\
					11. $\textbf{return}(C_1, C_2,\dotsc,C_m)$\\
				\end{minipage}\\
				\hline\\
				\begin{minipage}{370pt}
					Encryption under XCBv2 : $\textbf{E}_K^T(P)$\\\\
					100.  $P_m \la \text{lsb}_n(P)$\\
					101.  $A \la \text{msb}_{|P|-n}(P)$\\
					102.  $(P_1, P_2, \dotsc, P_{m-2}, P_{m-1}) \la \text{parse}_n(A)$\\
					103.  $\text{h} \la E_K(0^n)$\\
					104.  $K_e \la \text{msb}_{|K|}(E_{|K|}(0^{n-3}\|001)\|E_{|K|}(0^{n-3}\|010))$\\
					105.  $K_d \la \text{msb}_{|K|}(E_{|K|}(0^{n-3}\|011)\|E_{|K|}(0^{n-3}\|100))$\\
					106.  $K_c \la \text{msb}_{|K|}(E_{|K|}(0^{n-3}\|101)\|E_{|K|}(0^{n-3}\|110))$\\
					107. $CC \la E_{K_e}(P_m)$\\
					108. $S \la CC \oplus H_{h}(0^n\|T, P_1\|\dotso\|P_{m-2}\|\text{{\sf pad}}(P_{m-1})\|0^n)$\\
					109. $(C_1, \dotsc, C_{m-1} )\la \text{{\sf Ctr}}_{K_{c,S}}(P_1, \dotsc, P_{m-2}, P_{m-1})$\\
					110. $MM \la S\, \oplus \,H_h(T\|0^n, C_1 \| \dotso \| \text{{\sf pad}}(C_{m-1}) \| \text{\sf bin}_{\frac{n}{2}}(|T\|0^n|)\|\text{\sf bin}_{\frac{n}{2}}(|C_1\|\dotso \|C_{m-2}\|C_{m-1}|))$\\
					111. $C_m \la E^{-1}_{K_d}(MM)$\\
					112. $\textbf{return}(C_1, C_2,\dotsc, C_m)$\\
				\end{minipage}\\
				\hline
			\end{tabular}
		}
	\end{center}
	\caption{\label{fig : XCBv1_v2} Encryption using XCBv1 and XCBv2}
\end{figure}

\section{Differences between \text{XCBv1} and XCBv2}
\par The main difference between \text{XCBv1} with \text{XCBv2} is that later version uses only one hash key whereas the earlier one uses two hash keys which was made to reduce the cost of additional hash keys. There are some other differences which we can list as follows:  
\begin{enumerate}
	\item Keys generating by the same master key are different for XCBv1 and XCBv2. 
	\item In XCBv1, length of the key is fixed to 128 bits while XCBv2's key length is variable which can be 128, 192 or 256 bits as per requirement.
	\item As mentioned earlier, XCBv1 uses two hash keys as compare to XCBv2 which use only one hash key.
	\item In XCBv1, first block of the message is encrypted (in line 6) and then XOR with hash function (in line 7) while XCBv2 perform same operation with the last block of the plaintext.
	\item Padding is the part of XCBv2 hash function while XCBv1's hash function has no padding. 
	\item XCBv2 append the string of $0^n$ before tweak and after message in hash function (in line 108) while XCBv1 is simple hash function without appending any string with tweak and message.
	\item In XCBv1, both hash function use the different hash key and same definition of hash function while XCBv2 uses same hash key and different hash function in the scheme (in line 108 and 110).
\end{enumerate}
All keys are supposed to be random by the security of AES.



\section{Security Claims}
In this section, we state both security bounds as claimed in \cite{chakraborty} and our updated security bounds for both XCBv1 and XCBv2. We are interested in so called information theoretic security bound, i.e., we only state the bound for the schemes where the block ciphers are replaced by true random permutations. In both versions of XCB three block ciphers with different keys are used. We replace these block ciphers with three independent random permutations on $\nbits$, we call the resulting construction as $\text{XCBvl}[3\text{Perm}(n)]$ where $\text{l} \in \{1, 2\}$. For XCBv1 security bound as stated in \cite{chakraborty} is:

\begin{theorem}
	Let $A$ be an arbitrary adversary which queries only with messages/ciphers whose lengths are multiples of $n$ and $A$ asks a total of $q$ queries of overall query complexity $\sigma_n$ where each query is at most $\ell$ blocks long (each block of $n$ bits). Then,
	$${\bf Adv}^{\pm\nprp}_{\text{XCBv1}[3\text{Perm}(n)]}(A) \leq \frac{(3+2^{22})\ell q \sigma_n}{2^n}.$$
\end{theorem}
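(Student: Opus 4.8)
The plan is to follow the standard strong-pseudorandom-permutation proof template for hash-counter-hash tweakable enciphering schemes, comparing $\text{XCBv1}[3\text{Perm}(n)]$ against the ideal tweak-indexed random permutation that appears in the definition of $\textbf{Adv}^{\pm\nprp}_{\textbf{E}}$. First I would replace the three independent random permutations standing in for $E_{K_e}, E_{K_c}, E_{K_d}$ by three independent random \emph{functions} on $\nbits$, retaining uniform and independent hash keys $h_1, h_2$. Since the three primitives use distinct keys they never interact, and each permutation-to-function switch costs only a birthday-type term of order $\sigma_n^2/2^n$; because $\sigma_n \le \ell q$, this is absorbed into the claimed $\ell q \sigma_n / 2^n$. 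With random functions in place I would run the game-playing (equivalently, H-coefficient) argument: in the ideal world every ciphertext block is a freshly sampled uniform string, so the advantage is bounded by the probability that some \textbf{bad} event forces the real and ideal transcripts to diverge.

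The second step is to isolate the internal variables and the bad events. For the $i$-th query the construction forms $CC^{(i)} = E_{K_e}(P^{(i)}_1)$, the counter seed $S^{(i)} = CC^{(i)} \oplus H_{h_1}(\cdots)$, the counter inputs $\text{inc}^{j}(S^{(i)})$ fed to $E_{K_c}$, and $MM^{(i)} = S^{(i)} \oplus H_{h_2}(\cdots)$ fed to $E_{K_d}^{-1}$. The relevant bad events are collisions among (a) the inputs $P^{(i)}_1$ to $E_{K_e}$, (b) the counter inputs $\text{inc}^{j}(S^{(i)})$ to $E_{K_c}$, both within one query and across queries, and (c) the inputs $MM^{(i)}$ to $E_{K_d}$. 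As long as none of these occur, the three random functions are evaluated on fresh points and their outputs are independent uniform strings, so the real transcript is distributed exactly as the ideal one.

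For events (a) and (c) I would invoke the almost-XOR-universal property of the polynomial hash of \eqref{1}: for fixed distinct message/tweak pairs the map $h \mapsto H_h(X,T)$ is a nonzero polynomial in $h$ of degree at most $m+p+1 = O(\ell)$, so over the uniform hash key the probability that two such hashes differ by any fixed constant is at most $\ell/2^n$. Summing over the at most $\binom{q}{2}$ query pairs and weighting by the block counts bounds the contribution of (a) and (c) by terms of order $\ell q \sigma_n / 2^n$, which is exactly the non-constant part of the target bound.

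The main obstacle, and the origin of the large constant $2^{22}$, is event (b). Because $\text{inc}$ increments only the least significant $32$ bits modulo $2^{32}$, the sequence $S^{(i)}, \text{inc}(S^{(i)}), \dots$ is not a clean arithmetic progression on all of $\nbits$, and estimating $\Pr[\text{inc}^{a}(S^{(i)}) = \text{inc}^{b}(S^{(i')})]$ requires carefully counting the index pairs $(a,b)$ that can be forced to coincide once the seeds are fixed. I would reduce each such collision to the AXU bound on $S^{(i)} \oplus S^{(i')}$, again of order $\ell/2^n$, but the wrap-around combinatorics modulo $2^{32}$ inflate the number of dangerous $(a,b)$ pairs, and the crude counting imported from the GCM analysis of \cite{iwata2012breaking} is precisely what produces the factor $2^{22}$. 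Establishing the theorem as stated therefore amounts to faithfully transcribing that counter-collision bound; the genuine difficulty, deferred to the later improvement, is to replace this counting step by the sharper estimate of \cite{Iwata}.
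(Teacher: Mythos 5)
Your proposal follows essentially the same route as the machinery this paper actually uses: a game-based reduction to a random-string oracle (costing the absorbable $\binom{q}{2}/2^n$ term), a decomposition into collision events on the domains and ranges of the three independent permutations, Schwartz--Zippel/AXU bounds of order $\ell/2^n$ for the hash-dependent collisions, and a separate counting argument for collisions among the counter inputs $\text{inc}^j(S)$. Two remarks are in order. First, the paper never proves the stated theorem itself: the $(3+2^{22})$ bound is quoted from \cite{chakraborty}, and the proof apparatus the paper develops (games XCB1/RAND1/RAND2, the multisets $\mathcal{D}_i,\mathcal{R}_i$, and Lemma~\ref{lem:2}) is deployed to establish the sharper $(3+2^5)$ bound; your outline is, in effect, the \cite{chakraborty} proof that the paper describes and then improves, with your permutation-to-function switch playing the role of the paper's explicit range-collision multisets $\mathcal{R}_i$. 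Second, the only step that actually produces the constant $2^{22}$ --- the count of dangerous index pairs created by the wrap-around of $\text{inc}$ modulo $2^{32}$ --- is left in your write-up as a citation to \cite{iwata2012breaking} rather than carried out; since everything else in the argument is routine, that counting is the entire technical content of the stated bound, and a self-contained proof would have to reproduce it. This deferral is nonetheless exactly parallel to how the paper itself handles the analogous crux in its improved proof, where the key fact $w_{\max}\leq 32$ is cited from \cite{Iwata} rather than derived, so your proposal is structurally faithful and correctly locates where the constant comes from and why replacing that counting by the $w_{\max}$ estimate is what later reduces $2^{22}$ to $2^5$.
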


For XCBv2 the security bound given in \cite{chakraborty} is:
\begin{theorem}
	Let $A$ be an arbitary adversary which queries only with messages/ciphers whose lengths are multiples of $n$ and $A$ asks a total of $q$ queries of overall query complexity $\sigma_n$ where each query is at most $\ell$ blocks long (each block of $n$ bits). Then,
	$${\bf Adv}^{\pm\nprp}_{\text{XCBv2fb}[3\text{Perm}(n)]}(A) \leq \frac{(5+2^{22})\ell q \sigma_n}{2^n}.$$
\end{theorem}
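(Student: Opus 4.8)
The plan is to prove the bound in the information-theoretic model, where the three block ciphers are replaced by independent uniform random permutations $\pi_e,\pi_d,\pi_c$, using Patarin's H-coefficient technique in the style of the GCM analyses of \cite{chakraborty} and \cite{Iwata}. First I would fix an arbitrary deterministic adversary $A$ making $q$ queries of total block complexity $\sigma_n$, each at most $\ell$ blocks, and record its interaction with either the real oracle pair $(\textbf{E}_K,\textbf{E}_K^{-1})$ or an ideal random tweakable permutation pair as a transcript. Because a clean interpolation count needs the internal wiring, I would work with an \emph{extended} transcript that additionally reveals the hash key $h$ and, for each query, the intermediate values $CC$, $S$ and $MM$ of lines $107$--$111$; in the ideal world these auxiliaries are drawn from their correct marginals only after the query phase. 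Writing $p_{re}(\tau)$ and $p_{id}(\tau)$ for the probabilities of a transcript in the two worlds, the H-coefficient lemma gives
\[
\textbf{Adv}^{\pm\nprp}_{\text{XCBv2fb}[3\text{Perm}(n)]}(A)\le \Pr_{\text{id}}[\tau\in\mathcal{T}_{\text{bad}}]+\max_{\tau\notin\mathcal{T}_{\text{bad}}}\Bigl(1-\tfrac{p_{re}(\tau)}{p_{id}(\tau)}\Bigr),
\]
so the work splits into (i) exhibiting a good set on which the ratio is at least $1$, and (ii) bounding the ideal-world probability of the bad set.

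For step (i) I would declare a transcript good exactly when all the inputs actually presented to each permutation are distinct: the $P_m$-inputs to $\pi_e$, the $MM$-inputs to $\pi_d$, and the entire pooled collection of counter points $\mathrm{inc}^{\,i}(S)$ fed to $\pi_c$ over all queries. Under this distinctness $p_{re}(\tau)$ factorises into permutation-interpolation factors of the form $1/(2^n-j)$, and a standard block-by-block comparison with the length-preserving ideal probability shows $p_{re}(\tau)\ge p_{id}(\tau)$, i.e. the maximum in the displayed inequality is $0$. The advantage is therefore dominated entirely by $\Pr_{\text{id}}[\tau\in\mathcal{T}_{\text{bad}}]$, and it remains to bound each distinctness failure.

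The collision events are controlled by the algebraic structure of the hash. The function $H_h$ of \eqref{1} is a polynomial in $h$ of degree at most about $\ell+p+1$, so for any two distinct query/tweak pairs and any fixed target difference, the probability over the uniform key $h$ that their hash values differ by that target is $O(\ell/2^n)$ — the almost-XOR-universal property. I would invoke this to bound the probability that two queries collide in $S$ (reusing a counter stream), collide in $MM$ (an input collision at $\pi_d$), or collide at $\pi_e$, summing each over all $\binom{q}{2}$ pairs to obtain terms of order $\ell q\sigma_n/2^n$; tallying the separate sources is what yields the constant $5$. A point requiring care here is that the two hash calls in lines $108$ and $110$ share the single key $h$: the reversed argument orders $0^n\|T$ versus $T\|0^n$ together with the appended length block must be checked to rule out trivial cross-cancellations between the two invocations.

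The hard part, and the origin of the large constant, is the counter-overlap analysis. I must bound the chance that two starting values $S,S'$ (each nearly uniform, being a permutation output XORed with a hash value) generate counter sequences $\{\mathrm{inc}^{\,i}(S)\}$ and $\{\mathrm{inc}^{\,j}(S')\}$ that intersect. Since $\mathrm{inc}$ only stirs the low $32$ bits modulo $2^{32}$, any collision forces agreement on the top $n-32$ bits and an overlap of two length-$\ell$ windows inside $\mathbb{Z}_{2^{32}}$; summing this over all pairs and all offsets, while respecting the $2^{39}$-bit length cap and the modular wraparound, is precisely the crude window-counting step of \cite{chakraborty} that produces the factor $2^{22}$. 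Combining the AXU collision terms (the $5\ell q\sigma_n/2^n$ part) with this counter-overlap term (the $2^{22}\ell q\sigma_n/2^n$ part) gives the stated bound, and I expect the later improvement in the dissertation to come exactly from replacing this counting by the sharper estimate of \cite{Iwata}.
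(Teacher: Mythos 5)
First, a point of orientation: the theorem you set out to prove is not actually proved in this dissertation --- it is the earlier bound of \cite{chakraborty}, quoted only so that the dissertation's improved constants ($3+2^5$ and $5+2^5$) can be contrasted with it. The proof the dissertation does give (in full for XCBv1, with XCBv2fb stated as analogous) is a game-playing argument, not an H-coefficient one: games XCB1, RAND1 and RAND2, the fundamental lemma of game playing, the switch from the $\pm rnd$ world to a tweak-indexed permutation at cost $\binom{q}{2}/2^n$ in (\ref{idea}), and then a collision analysis of the multisets $\mathcal{D}_1,\mathcal{D}_2,\mathcal{D}_3,\mathcal{R}_1,\mathcal{R}_2,\mathcal{R}_3$ via (\ref{18}) and Claims \ref{claim:5}--\ref{claim:10}. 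Your H-coefficient skeleton could in principle replace that skeleton, but as written it has a structural gap: your good set demands distinctness only of the \emph{inputs} presented to $\pi_e,\pi_c,\pi_d$. For permutations this is not enough. The transcript also determines the corresponding \emph{outputs} (the $CC$ values for $\pi_e$, the keystream blocks $P_i\oplus C_i$ for $\pi_c$, and the input/output pairs of $\pi_d$), and a transcript in which two such outputs coincide while the inputs differ has $p_{re}(\tau)=0$; if such transcripts are declared good, the ratio term in your H-coefficient inequality is $1$, not $0$, and the proof collapses. This is exactly why the paper must bound $\Pr[{\sf COLLR}_i]$ alongside $\Pr[{\sf COLLD}_i]$; the range collisions contribute the $\binom{\sum_s m^s-q}{2}/2^n$ and $(q-1)\sigma_n/2^n$ terms of the final tally. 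Relatedly, your claim that a ``block-by-block comparison'' gives $p_{re}(\tau)\ge p_{id}(\tau)$ against the length-preserving ideal permutation is glossed over; the clean route is to compare against random strings first and pay $\binom{q}{2}/2^n$ separately, as the paper does.

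The more serious gap is your account of the constant $2^{22}$, which is not the right mechanism and would not yield a correct proof. If $S$ and $S'$ really were ``nearly uniform'' and independent, your window-overlap count in $\mathbb{Z}_{2^{32}}$ would give roughly $2\ell/2^n$ per pair of queries, and no constant of $2^{22}$ (or even $2^5$) would appear at all. The large constant arises precisely in the cases where the counters are \emph{not} independent: for XCBv2fb, two encryption queries reusing the block that enters $\pi_e$ (so $CC^s=CC^{s'}$), or a query that feeds back an earlier value so that $MM$ repeats. In those cases $S^s\oplus S^{s'}$ is a nonzero polynomial in the single hash key $h$, and the collision probability must be computed over the choice of $h$. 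Moreover, since $\text{inc}$ is addition modulo $2^{32}$ rather than an XOR, the event $\text{inc}^i(S^s)=S^{s'}$ is \emph{not} a polynomial equation in $h$; it must first be decomposed as a union over the difference sets $\mathbb{Y}_i=\{{\sf bin}_{32}(\text{int}(Y)+i \bmod 2^{32})\oplus Y \mid Y\in\tbits\}$, each member of which gives one Schwartz--Zippel event of probability at most $\ell/2^n$. The bound of \cite{chakraborty}, following \cite{iwata2012breaking}, counts these differences crudely, and that counting is where $2^{22}$ comes from; the dissertation's Lemma \ref{lem:2} instead counts only the new differences $\mathbb{W}_r=\mathbb{Y}_r\setminus\bigcup_{j<r}\mathbb{Y}_j$ and invokes $w_{\max}\le 32$ from \cite{Iwata}, which is what reduces the constant to $2^5$. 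Your sketch never confronts this incompatibility between $\text{inc}$ and the $GF(2^n)$ structure, and that incompatibility is the mathematical heart of both the old constant and the improved one.
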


\par We show that the bound of $\frac{(3+2^{22})\ell q \sigma_n}{2^n}$ and $\frac{(5+2^{22})\ell q \sigma_n}{2^n}$ in both the above theorems can be improved to $\frac{(3+2^5)\ell q \sigma_n}{2^n}$ and $\frac{(5+2^5)\ell q \sigma_n}{2^n}$ for XCBv1 and XCBv2 respectively. Specifically, we prove the following two theorems.\\
For XCBv1, our security bound is as follows:
\begin{theorem}\label{imp_thm_xcbv1}
	Consider an arbitrary adversary $A$ which queries only with messages/ciphers whose lengths are multiples of $n$ and $A$ asks a total of $q$ queries of overall query complexity $\sigma_n$ where each query is at most $\ell$ blocks long (each block of $n$ bits). Then,
	\begin{equation}\label{imp_xcbv1}
	{\bf Adv}^{\pm\nprp}_{\text{XCBv1}[3\text{Perm}(n)]}(A) \leq \frac{(3+2^5)\ell q \sigma_n}{2^n}.
	\end{equation}
\end{theorem}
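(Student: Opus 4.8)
The plan is to argue entirely in the information-theoretic setting, so the three block ciphers are already replaced by independent random permutations $\pi_e,\pi_c,\pi_d$ in the roles of $E_{K_e},E_{K_c},E_{K_d}$, and to bound $\textbf{Adv}^{\pm\nprp}_{\text{XCBv1}[3\text{Perm}(n)]}(A)$ via the H-coefficient technique within the distinguishing framework of \cite{halevi2004parallelizable}. First I would fix a deterministic adversary $A$ making $q$ queries of total block-length $\sigma_n$, each at most $\ell$ blocks, and describe a transcript as the list of query/answer tuples together with the internal quantities $h_1,h_2$ and $CC^{(i)},S^{(i)},MM^{(i)}$ that are released to $A$ after it halts. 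In the real world these come from $\text{XCBv1}[3\text{Perm}(n)]$; in the ideal world the outer outputs $(C_1,\dots,C_m)$ come from a random tweak-indexed permutation $\boldsymbol{\pi}$ while the internal values are sampled consistently. The goal is to partition attainable transcripts into a good set and a bad set, show the real and ideal interpolation probabilities agree closely on good transcripts, and show the bad set carries small ideal probability; the H-coefficient lemma then bounds the advantage by the bad-transcript probability.

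Next I would isolate the internal collisions that govern security. In each query $S^{(i)}=CC^{(i)}\oplus H_{h_1}(P_2\|\cdots\|P_m,T)$, the counter layer evaluates $\pi_c$ at the points $\text{inc}^{j}(S^{(i)})$ for $0\le j\le m-2$, and $MM^{(i)}=S^{(i)}\oplus H_{h_2}(C_2\|\cdots\|C_m,T)$ feeds $\pi_d^{-1}$. The construction mimics the ideal object as long as (a) across all queries the counter inputs $\text{inc}^{j}(S^{(i)})$ are pairwise distinct, so $\pi_c$ is evaluated at fresh points and its outputs look uniform, and (b) the inputs $P_1^{(i)}$ to $\pi_e$ and the outputs $MM^{(i)}$ feeding $\pi_d^{-1}$ are suitably distinct. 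I would therefore define the bad events as the negations of (a) and (b): existence of a counter collision $\text{inc}^{a}(S^{(i)})=\text{inc}^{b}(S^{(i')})$ or of an $MM$ collision $MM^{(i)}=MM^{(i')}$, and verify these exhaust the ways a transcript becomes bad.

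The heart of the argument, and the main obstacle, is bounding the counter-collision probability. Since $H_{h_1}$ is a polynomial in $h_1$ of degree at most $\ell+p+1$ (and likewise $H_{h_2}$ in $h_2$), each collision condition, were it a single identity over $GF(2^n)$, would hold for at most $\deg/2^n$ choices of the random hash key and could be summed over the relevant pairs. The difficulty is that $\text{inc}$ increments only the low $32$ bits modulo $2^{32}$, so $\text{inc}^{a}(S^{(i)})=\text{inc}^{b}(S^{(i')})$ is not one $GF(2^n)$ identity but a conjunction of a $96$-bit equality on the top part of $S^{(i)}\oplus S^{(i')}$ and a $32$-bit modular relation on the bottom part, whose carry structure must be tracked. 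Precisely the count of how many offset differences $a-b$ and carry patterns turn this partial-bit condition into a genuine polynomial identity in the hash key is what produces the constant. The loose treatment adopted in \cite{chakraborty} (inherited from \cite{iwata2012breaking}) bounds this count by $2^{22}$, whereas the refined analysis of \cite{Iwata} shows the effective multiplier is only $2^5$; importing that refined counting is exactly what upgrades the bound.

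Finally I would assemble the pieces: sum the per-pair collision probabilities, each of the form $(\text{degree}\cdot\text{multiplier})/2^n$, over all query pairs and all counter offsets, which contributes the $\ell q\sigma_n$ factor with multiplier $2^5$, and add the remaining input-distinctness and permutation-switch terms for $\pi_e$ and $\pi_d$, which contribute the additive $3$. Feeding the resulting bad-probability bound, together with the good-transcript ratio bound, into the H-coefficient lemma yields
\begin{equation*}
\textbf{Adv}^{\pm\nprp}_{\text{XCBv1}[3\text{Perm}(n)]}(A)\leq\frac{(3+2^5)\ell q\sigma_n}{2^n},
\end{equation*}
as claimed.
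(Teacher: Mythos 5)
Your proposal is correct and its technical core coincides with the paper's proof: the bound hinges on reducing the counter-collision events $\mathrm{inc}^a(S^s)=\mathrm{inc}^b(S^{s'})$ to a union of genuine $GF(2^n)$ identities via the XOR-pattern sets of \cite{Iwata} (the $\mathbb{Y}_r$ and their disjointifications $\mathbb{W}_r$, with $w_{\max}\le 2^5$ patterns per offset replacing the $2^{22}$ of \cite{iwata2012breaking,chakraborty}), then applying Schwartz--Zippel to each identity (a nonzero polynomial of degree at most $\ell$ in the hash keys), and summing over query pairs and offsets to obtain the $2^5\ell q\sigma_n/2^n$ term; this is exactly the paper's lemma on $\mathrm{inc}$ collisions and its ${\sf COLLD}_2$ claim. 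The only genuine divergence is the wrapper: you use the H-coefficient method with internal values $(h_1,h_2,CC^s,S^s,MM^s)$ released after the interaction, while the paper runs a game chain XCB1 $\rightarrow$ RAND1 $\rightarrow$ RAND2 (fundamental lemma of game-playing, then deferred ``finalization'' sampling of the hash keys and internals) and adds the $\binom{q}{2}/2^n$ random-strings-to-tweakable-permutation switch quoted from \cite{halevi2004parallelizable}; the two are interchangeable here, since your ideal world with ``consistently sampled'' internals is precisely RAND2's finalization phase, and your good-transcript ratio analysis absorbs what the paper handles by that final switching term, so the choice is presentational. One caution for fleshing out your sketch: the bad set must also include the range-side collisions — two $CC$ values, two counter outputs $C_{j+2}^s\oplus P_{j+2}^s$, or two $C_1$ values colliding — which your conditions (a) and (b) do not name; a permutation cannot produce such transcripts while your ideal world can, and the $CC$ collision for a pair of decryption queries is not a pure birthday term but again needs Schwartz--Zippel, this time on a nonzero bivariate polynomial in $(h_1,h_2)$ of total degree $\ell$ (the paper's ${\sf COLLR}_1$ claim). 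Those events, together with the permutation switches, are exactly what fill out your additive constant $3$, and the assembly then closes at $(3+2^5)\ell q\sigma_n/2^n$ as claimed.
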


and for XCBv2 our security bound is given as:

\begin{theorem}\label{imp_thm_xcbv2}
	Consider an arbitrary adversary $A$ which queries only with messages/ciphers whose lengths are multiples of $n$ and $A$ asks a total of $q$ queries of overall query complexity $\sigma_n$ where each query is at most $\ell$ blocks long (each block of $n$ bits). Then,
	\begin{equation}\label{imp_xcbv2}
	{\bf Adv}^{\pm\nprp}_{\text{XCBv2fb}[3\text{Perm}(n)]}(A) \leq \frac{(5+2^5)\ell q \sigma_n}{2^n}.
	\end{equation}
\end{theorem}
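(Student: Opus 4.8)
The plan is to follow the same information-theoretic framework that underlies Theorem~\ref{imp_thm_xcbv1}, adapt it to the XCBv2fb structure (lines~100--112), and isolate the single place in the argument where the constant $2^{22}$ of \cite{chakraborty} enters, so that it can be replaced by the sharper estimate coming from \cite{Iwata}. First I would fix an arbitrary deterministic adversary $A$ making $q$ queries of total block length $\sigma_n$, each at most $\ell$ blocks, and compare its interaction with $\text{XCBv2fb}[3\text{Perm}(n)]$ against its interaction with an ideal tweak-indexed permutation $\boldsymbol{\pi} \rand Perm^{\mathcal{T}}(\mathcal{M})$. Since replacing the three block ciphers by three independent random permutations $\pi_e,\pi_c,\pi_d$ is already built into the model, the whole analysis is purely combinatorial.

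The skeleton is the transcript-based argument of \cite{chakraborty}, which in turn follows \cite{iwata2012breaking}: I would record for each query the tweak, the plaintext/ciphertext, and all internal values ($CC$, the intermediate $S$, the counter blocks $\text{inc}^i(S)$, $MM$, and $C_m$), single out a set of \emph{bad} transcripts, bound $\Pr[\text{bad}]$ in the ideal world, and show that every good transcript is almost as likely in the real world as in the ideal world. The bad events are exactly the collisions that would force one of $\pi_e,\pi_c,\pi_d$ to be evaluated twice inconsistently or would let $A$ detect non-random behaviour: collisions among the single-block inputs $P_m$ to $\pi_e$ and among the outputs $MM$ feeding $\pi_d^{-1}$; collisions among the counter-mode inputs $\text{inc}^a(S_i)$ and $\text{inc}^b(S_j)$, including within one query; and the event that two distinct queries produce the same $S$. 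Because XCBv2 reuses a single hash key $h$ in both hash layers (lines~108 and~110) and carries the extra padded and length-encoded blocks, there are two more collision relations to control than in XCBv1, which is precisely why the additive constant here is $5$ rather than $3$.

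The heart of the proof, and the only nontrivial step, is bounding the counter-collision event and the $S$-collision event. Each such collision yields an equation of the form $H_h(\cdot)\oplus H_h(\cdot)=\delta$, where $\delta$ is a constant determined by the transcript and by the increment offsets $a,b$; since $H_h$ is a polynomial in $h$ of degree $O(\ell)$ with a fixed leading coefficient, each equation is satisfied by at most $O(\ell)$ values of $h$, giving probability $O(\ell/2^n)$ per relation. The subtlety — and the origin of the old $2^{22}$ — is that $\text{inc}$ only stirs the low $32$ bits, so a naive union bound over all offset pairs $(a,b)$, or over all ways the $32$-bit counter can wrap, inflates the constant. Here I would import the refined counting of \cite{Iwata}: rather than bounding the number of dangerous pairs by the crude $2^{22}$, one groups the counter blocks by their top $n-32$ bits and uses that within a group the low-$32$-bit values are distinct, so the total contribution of all counter collisions to a fixed pair of queries is only a small constant ($2^5$) times $\ell/2^n$.

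Finally I would assemble the pieces: summing the per-pair bounds over all $\binom{q}{2}$ query pairs and all counter positions, the dominant term is $\big((5+2^5)\,\ell\,q\,\sigma_n\big)/2^n$, and the good-transcript ratio bound contributes nothing larger, giving \eqref{imp_xcbv2}. The main obstacle I anticipate is technical rather than conceptual: verifying that the \cite{Iwata} counting, developed for the GCM hash, carries over to the XCBv2 hash despite its different block layout (the $0^n$ prefix/suffix, the padding, and the single reused key across the two layers), and in particular checking that the leading coefficient of each differential polynomial is genuinely nonzero so that the degree bound applies. Confirming this for the $S$-collision event, where the same $h$ governs both the pre-counter and post-counter hashes, is the most delicate point.
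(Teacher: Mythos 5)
Your proposal takes essentially the same route as the paper: the paper proves the XCBv1 bound (Theorem~\ref{imp_thm_xcbv1}) by the game sequence XCB1/RAND1/RAND2, reducing everything to collision probabilities in the domains and ranges of the three random permutations, bounding the counter-collision term by Schwartz--Zippel together with the $w_{\max}\leq 2^5$ counting imported from \cite{Iwata}, and then settles XCBv2fb with ``similarly''; your bad-transcript formulation, the collision events you enumerate, and your replacement of the crude $2^{22}$ offset count by the refined $2^5$ bound are that same argument in slightly different (H-coefficient style) clothing. The adaptations you single out for XCBv2 --- the single hash key shared by both layers (so the difference polynomials live in one variable and one must check they are nonzero, which is Case~1 of Lemma~\ref{lem:2}), the padded and length blocks, and the constant $5$ in place of $3$ --- are precisely the details the paper leaves implicit in its ``similarly''.
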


The complete proof of these theorems are presented in the following section. The proof heavily uses a technique from \cite{Iwata} where an improved bound on GCM was proved.

\section{Some useful lemmas}
We start with a few useful lemmas which would be used later.\\ 
The following Version of the Schwartz-Zippel lemma is from \cite{rajeev}.

\begin{lemma}\label{szl}
	Let $\mathbb{F}$ be a field and $p \in \mathbb{F}[x_1, x_2,\dotsc, x_r]$ be a $r$-variate, non-zero polynomial with total degree $d$. Let $S$ be finite subset of $\mathbb{F}$, and $x_1, x_2,\dotsc,x_r$ be selected uniformly at random from $S$. Then 
	$$ \Pr[p(x_1, x_2,\dotsc, x_r) = 0] \leq \dfrac{d}{|S|}.$$
\end{lemma}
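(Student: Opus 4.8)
The plan is to prove the statement by induction on the number of variables $r$, exploiting the fact that the single-variable case is exactly the classical bound on the number of roots of a univariate polynomial over a field. The inductive step regroups a multivariate polynomial as a univariate polynomial in $x_1$ whose coefficients live in $\mathbb{F}[x_2, \dots, x_r]$, and then splits the failure event according to whether the leading coefficient vanishes.

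For the base case $r = 1$, I would use that a nonzero polynomial of degree at most $d$ in one variable over a field has at most $d$ roots in $\mathbb{F}$, hence at most $d$ roots lying in $S$; since $x_1$ is drawn uniformly from $S$, this gives $\Pr[p(x_1) = 0] \le d/|S|$ immediately. For the inductive step, I would write $p = \sum_{i=0}^{k} x_1^i \, p_i(x_2, \dots, x_r)$, where $k \le d$ is the largest power of $x_1$ actually occurring, so that $p_k$ is a nonzero polynomial. Since each monomial of $p_k$ multiplied by $x_1^k$ must have total degree at most $d$, the total degree of $p_k$ is at most $d - k$. I would then condition on the event $A$ that the leading coefficient $p_k(x_2, \dots, x_r)$ equals zero, using $\Pr[p = 0] \le \Pr[A] + \Pr[p = 0 \mid \bar{A}]$.

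The two terms are bounded independently. Applying the induction hypothesis to the $(r-1)$-variate nonzero polynomial $p_k$ of total degree at most $d - k$ gives $\Pr[A] \le (d-k)/|S|$. Conditioned on $\bar{A}$, substituting the chosen values of $x_2, \dots, x_r$ leaves a genuine degree-$k$ univariate polynomial in $x_1$, which has at most $k$ roots, so the independently drawn $x_1$ is a root with probability at most $k/|S|$. Summing yields $(d-k)/|S| + k/|S| = d/|S|$, closing the induction. The step requiring the most care is precisely this conditioning argument: I must verify that the event $A$ depends only on $x_2, \dots, x_r$ (so that, under $\bar{A}$, the fresh uniform draw of $x_1$ is independent of the substituted values), and that on $\bar{A}$ the $x_1$-degree is genuinely preserved at $k$, which is what legitimizes applying the univariate root bound. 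Handling this coupling correctly, rather than the individual algebraic estimates, is the crux of the proof.
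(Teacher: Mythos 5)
Your proof is correct, and it is the classical induction-on-variables argument for the Schwartz--Zippel lemma; the paper itself gives no proof at all, importing the statement directly from its cited reference \cite{rajeev}, where essentially this same argument appears. Your handling of the conditioning is sound: since the $x_i$ are drawn independently, the event that the leading coefficient $p_k(x_2,\dotsc,x_r)$ vanishes is determined by $x_2,\dotsc,x_r$ alone, so on its complement the univariate root bound applies to the fresh draw of $x_1$, and the two bounds $(d-k)/|S|$ and $k/|S|$ sum to the claimed $d/|S|$.
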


For $0 \leq r \leq 2^{32}-1$,
$\mathbb{Y}_r \api \{{\sf bin}_{32} (\text{int}(Y) + r \mod 2^{32}) \oplus Y | Y \in \tbits \}$ and $\text{inc}^r(X) = (X \oplus 0^{n-32}\|Y)$ for some $Y \in \mathbb{Y}_r$. \\
From \cite{Iwata} we define $\mathbb{W}_r \subseteq \tbits$, for $ 0 \leq r \leq 2^{32}-1$, as
$\mathbb{W}_0 \api \mathbb{Y}_0$ and $\mathbb{W}_r \api \mathbb{Y}_r \backslash \bigcup_{i=0}^{r-1} \mathbb{Y}_i $ and $r \ge 1$. We denote cardinality as $w_r \api \# \mathbb{W}_r$ and $w_{\max} \api max\{w_r \ | \ 0 \leq r \leq 2^{32}-1\}$ and it was shown in \cite{Iwata} that $w_{\max} \leq 32$.\\
\begin{lemma}\label{lem:2}
	\textbf{1.} Let $X, Y, X', Y' \in  \mbits$, such that $(X,Y) \neq (X', Y')$. Let $C, C' \in \nbits$ and $h \rand \nbits, S = C \oplus H_h(X, Y),$ and $S' = C' \oplus H_h(X', Y'),$ where $H_h(\cdot)$ is defined in (\ref{1}). Then,
	
	$$\Pr\bigg[\bigvee_{i=0}^{m^s-2}\bigg(\text{inc}^i(S)\oplus S' = 0\bigg)\bigg] \leq \dfrac{w_{max}{\ell}(m^s-1)}{2^n}.$$
	
	\textbf{2.} Let $X, Y, X', Y' \in  \mbits$, $C, C' \in \nbits$ and $h_1, h_2 \rand \nbits, S = C \oplus H_{h_1}(X, Y),$ and $S' = C' \oplus H_{h_2}(X', Y')$. Then,
	
	$$\Pr\bigg[\bigvee_{i=0}^{m^s-2}\bigg(\text{inc}^i(S)\oplus S' = 0\bigg)\bigg] \leq \dfrac{w_{max}{\ell}(m^s-1)}{2^n}.$$
	
	In both cases $\ell$ is the $\text{max}\{\ell^s,\ell^{s'}\}$ where $\ell^s$ and $\ell^{s'}$ are the degrees of the two polynomials $S$ and $S'$ respectively. In the first case probability is taken over the random choice of $h$, and in the second case it is taken over the random choice of $h_1, h_2$.
\end{lemma}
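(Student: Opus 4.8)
The plan is to reduce both statements to a polynomial root-counting problem governed by the Schwartz--Zippel lemma (Lemma~\ref{szl}), absorbing the combinatorics of the increment map into the sets $\mathbb{Y}_i$ and $\mathbb{W}_i$. The first observation is that $\text{inc}$ acts as a controlled XOR: by the definition recalled just before the lemma, $\text{inc}^i(S) = S \oplus (0^{n-32}\|w)$ for some $w \in \mathbb{Y}_i$ determined by $i$ and $\text{lsb}_{32}(S)$. Hence the event $\text{inc}^i(S)\oplus S' = 0$ forces $S\oplus S' = 0^{n-32}\|w$ with $w\in\mathbb{Y}_i$. Taking the disjunction over $0\le i\le m^s-2$, I would argue that the whole event implies
$$ S \oplus S' \in \{\, 0^{n-32}\|w : w \in \textstyle\bigcup_{i=0}^{m^s-2}\mathbb{Y}_i \,\}, $$
which converts an event about the non-algebraic map $\text{inc}$ into membership of the algebraic quantity $S\oplus S'$ in an explicit finite set.

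The next step is to bound the size of that set. Since $\mathbb{W}_i = \mathbb{Y}_i \setminus \bigcup_{j<i}\mathbb{Y}_j$, the $\mathbb{W}_i$ are disjoint and $\bigcup_{i=0}^{m^s-2}\mathbb{Y}_i = \bigcup_{i=0}^{m^s-2}\mathbb{W}_i$, so the number of distinct values $w$ is $\sum_{i=0}^{m^s-2} w_i \le (m^s-1)\,w_{\max}$, where $w_{\max}\le 32$ by \cite{Iwata}. This is precisely the step that replaces the crude $2^{22}$ factor with $2^5$. I would then union-bound over these at most $(m^s-1)w_{\max}$ distinct values, so the target probability is at most $\sum_w \Pr[\,S\oplus S' = 0^{n-32}\|w\,]$. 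For each fixed $w$ this is a single polynomial equation $p_w = 0$ in the hash key(s), where $p_w = C\oplus C'\oplus H_{h}(X,Y)\oplus H_{h}(X',Y')\oplus(0^{n-32}\|w)$ in case~1, and its two-key analogue in case~2.

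It remains to verify that $p_w$ is a non-zero polynomial of degree at most $\ell=\max\{\ell^s,\ell^{s'}\}$, so that Lemma~\ref{szl} gives $\Pr[p_w=0]\le \ell/2^n$; summing over the $w$'s then yields $w_{\max}\ell(m^s-1)/2^n$ in both parts. This non-vanishing check is where the two cases diverge and is the main obstacle. In case~1 the single key $h$ sits in both hashes, so $p_w$ is non-zero only because $(X,Y)\neq(X',Y')$ makes $H_h(X,Y)\oplus H_h(X',Y')$ a non-zero univariate polynomial (its coefficient vectors differ in some degree-$\ge 1$ term, which survives the XOR with the constants $0^{n-32}\|w$ and $C\oplus C'$); the $r=1$ instance of Schwartz--Zippel then applies. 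In case~2 there is no distinctness hypothesis, and I would instead exploit the independence of $h_1,h_2$: $H_{h_1}(X,Y)$ contributes only pure powers of $h_1$ and $H_{h_2}(X',Y')$ only pure powers of $h_2$, so no monomials cancel across them and $p_w$ is a non-zero bivariate polynomial of total degree at most $\ell$, handled by the $r=2$ instance. The two delicate points are (i) carrying out the reduction from the $\text{inc}$-disjunction to set membership of $S\oplus S'$ without double counting, which is exactly what the $\mathbb{W}_i$ decomposition secures, and (ii) excluding degenerate inputs for which a hash polynomial could collapse to a constant, so that $p_w\neq 0$ is guaranteed.
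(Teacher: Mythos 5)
Your proposal is correct and follows essentially the same route as the paper's proof: both convert the event $\text{inc}^i(S)\oplus S'=0$ into the equation $S\oplus S' = 0^{n-32}\|w$ with $w\in\mathbb{Y}_i$, both use the disjoint $\mathbb{W}_i$ decomposition to count at most $(m^s-1)w_{\max}$ distinct values (the paper writes this as an inner sum over $\mathbb{Y}_i\setminus\bigcup_{j<i}\mathbb{Y}_j$, justified by its pairwise claim about distinct $r,r'$), and both finish by applying Lemma~\ref{szl} to each fixed-$w$ equation to get $\ell/2^n$ per term. Even your flagged point (ii) about degenerate inputs mirrors the paper, which simply asserts in Case~2 that $S$ and $S'$ are non-zero polynomials without further argument.
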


\begin{proof}
	\textbf{Case 1.} The proof uses technique given in \cite{Iwata}.\\
	
	Let's figure out the upper bound on collision with two distinct pair $(X,Y)$ and $(X', Y')$ with $0 \leq r < r' \leq 2^{32} -1 $ i.e. 
	$$ \Pr[(\text{inc}^r(S) = S') \vee (\text{inc}^{r'}(S) = S')]$$
	
	Also we obtain the following upper sum bound
	\begin{multline}
	\Pr[(\text{inc}^r(S) = S') \vee (\text{inc}^{r'}(S) = S')] \leq \sum_{Y \in \mathbb{Y}_r} \Pr[S \oplus (0^{n-32}\|Y) = S']  \\ + \sum_{Y' \in \mathbb{Y}_{r'}} \Pr[S \oplus (0^{n-32}\|Y') = S'] 
	\end{multline}
	since $\text{inc}^r(X) = (X \oplus 0^{n-32}\|Y)$ for some $Y \in \mathbb{Y}_r.$\\
	
	\textbf{Claim:} For $ 0 \leq r'  <   r \leq 2^{32}-1$, and $Y \in \tbits$ such that $Y \in \mathbb{Y}_r$ and $Y \in \mathbb{Y}_{r'}$. Then there does not exist $X \in \nbits$ such that $\text{inc}^r(X) = X \oplus (0^{n-32}\|Y)$ and $\text{inc}^{r'}(X) = X \oplus (0^{n-32}\|Y)$ simultaneously.\\
	
	\textbf{Proof:} Proof by contradiction. Let's such $r$ and $r'$ exist, without loss of generality $r' < r$. Therefore, $\text{inc}^r(X) = \text{inc}^{r'}(X)$ which imply $\text{inc}^{r-r'}(X) = X$ which is not possible. Hence, $r$ and $r'$ are not distinct.\\
	
	So, we can conclude with an upper bound i.e.
	\begin{multline}
	\Pr[(\text{inc}^r(S) = S') \vee (\text{inc}^{r'}(S) = S')] \leq \sum_{Y \in \mathbb{Y}_r} \Pr[S \oplus (0^{n-32}\|Y) = S'] \\ + \sum_{Y' \in \mathbb{Y}_{r'} \backslash { \mathbb{Y}_r} } \Pr[S \oplus (0^{n-32}\|Y') = S']. 
	\end{multline}
	
	After generalisation for $m^s-2$, we get
	
	\begin{equation}
	\Pr\bigg[\bigvee_{i=0}^{m^s-2}\bigg(\text{inc}^i(S)\oplus S' = 0\bigg)\bigg] \leq \sum_{0 \leq i \leq m^s-2} \sum_{Y \in \mathbb{Y}_i \backslash \bigcup_{j=0}^{i-1} \mathbb{Y}_j } \Pr[S \oplus (0^{n-32}\|Y) = S']. \label{first} 
	\end{equation}
	Now, by using Schwartz-Zippel Lemma, we know
	$$\Pr[S \oplus (0^{n-32}\|Y) = S'] \leq \dfrac{\ell}{2^n}. $$
	Therefore,
	\begin{equation}
	\sum_{0 \leq i \leq m^s-2} \sum_{Y \in \mathbb{Y}_i \backslash \bigcup_{j=0}^{i-1} \mathbb{Y}_j } \Pr[S \oplus (0^{n-32}\|Y) = S'] \leq \sum_{0 \leq i \leq m^s-2} \dfrac{w_{max}{\ell}}{2^n}.\label{second}
	\end{equation}
	From equations (\ref{first}) and (\ref{second}), we can conclude
	$$\Pr\bigg[\bigvee_{i=0}^{m^s-2}\bigg(\text{inc}^i(S)\oplus S' = 0\bigg)\bigg] \leq \dfrac{w_{max}{\ell}(m^s-1)}{2^n}.$$

\end{proof}

\textbf{Case 2.} $h_1, h_2$ are selected independently and uniformly at random from $\nbits$, and $S = H_{h_1}(X,Y)$ and $S' = H_{h_2}(X',Y')$. According to the definition of $H(\cdot)$, both $S$ and $S'$ are non-zero polynomials. So, the result will be same as of Case 1 by Lemma \autoref{szl}.


\section{Repairing XCB Security proofs}	
Proof of the theorem is heavily based on the proof given in \cite{chakraborty}. As stated earlier, in place of three block cipher given in line 6,8 and 10 of XCBv1, we use the three different permutation on $n$-bit string. The encryption and decryption of the scheme of $\text{XCBv1}[3\text{Perm}(n)]$ by $\textbf{E}_{\tilde{\pi},\tilde{h}}$ and $\textbf{D}_{\tilde{\pi},\tilde{h}}$ respectively, where $\tilde{\pi} = (\pi_1, \pi_2, \pi_3)$  and $\pi_1, \pi_2, \pi_3$ are three permutation selected uniformly and independently  and $\tilde{h} = (h_1, h_2)$ where $h_1$ and $h_2$ are two hash keys selected uniformly and independently to other variables.

\par For proving (\ref{imp_xcbv1}), we need to consider an adversary's advantage in distinguishing $\text{XCBv1}[3\text{Perm}(n)]$ from an oracle which simply returns random bit strings. This advantage defined in following way :
\begin{equation*}
\textbf{Adv}^{\pm rnd}_{\text{XCBv1}[3\text{Perm}(n)]}(A) = \bigg|\Pr\bigg[\tilde{\pi} \rand 3\text{Perm}(n), \tilde{h} \rand \nbits : A^{\textbf{E}_{\tilde{\pi},\tilde{h}},\textbf{D}_{\tilde{\pi},\tilde{h}}}\Rightarrow 1\bigg]
\end{equation*}
\begin{equation}
\quad - \Pr\bigg[A^{\$(\cdot,\cdot),\$(\cdot,\cdot)}\Rightarrow 1\bigg]\bigg|,
\end{equation}
where $\$(\cdot,M)$ or $\$(\cdot,C)$ returns independently distributed random bits of length $|M|$ or $|C|$  respectively. The basic idea of proving (\ref{imp_xcbv1}) is as follows.

\begin{eqnarray}
\nonumber
\textbf{Adv}^{\pm \widetilde{prp}}_{\text{XCBv1[3Perm}(n)]}(A) &=& \bigg(\Pr\bigg[\tilde{\pi} \rand \text{3Perm}(n), \tilde{h} \rand \nbits : A^{\textbf{E}_{\tilde{\pi},\tilde{h}},\textbf{D}_{\tilde{\pi},\tilde{h}}}\Rightarrow 1\bigg]\\\nonumber
&&- \Pr\bigg[\pi \rand \text{Perm}^{\mathcal{T}}{\mathcal{(M)}}:A^{\pi(\cdot,\cdot),\pi^{-1}{(\cdot,\cdot)}}
\Rightarrow 1\bigg]\bigg)\\\nonumber
&=& \bigg(\Pr\bigg[\tilde{\pi} \rand \text{3Perm}(n), \tilde{h} \rand \nbits : A^{\textbf{E}_{\tilde{\pi},\tilde{h}},\textbf{D}_{\tilde{\pi},\tilde{h}}}\Rightarrow 1\bigg]\\\nonumber
&& - \Pr\bigg[A^{\$(\cdot, \cdot),\$(\cdot, \cdot)}\Rightarrow 1\bigg]\bigg)\\\nonumber
&& + \bigg(\Pr\bigg[A^{\$(\cdot, \cdot),\$(\cdot, \cdot)}\Rightarrow 1\bigg]\\\nonumber
&& - \Pr\bigg[\pi \rand Perm^{\mathcal{T}}{\mathcal{(M)}}:A^{\pi(\cdot,\cdot),\pi^{-1}{(\cdot,\cdot)}}
\Rightarrow 1\bigg]\bigg)\\
&\leq& \textbf{Adv}^{\pm rnd}_{\text{XCBv1[3Perm}(n)]}(A) + \binom{q}{2}\frac{1}{2^n}\label{idea}.
\end{eqnarray}
where $q$ is the number of queries made by the adversary. For proof of the last inequality see \cite{halevi2004parallelizable}. Thus, the main task of the proof now reduces to obtaining an upper bound on $\textbf{Adv}^{\pm rnd}_{\text{XCBv1[3Perm}(n)]}(A)$.
We prove this by the usual techniques of sequence of games which are in games XCB1 (\autoref{fig : XCB1_RAND1}), RAND1 (Figure \ref{fig : XCB1_RAND1}) and RAND2 (\autoref{fig : XCBv1_RAND2}).

\textit{Game XCB1} 
is same as in \autoref{fig : XCB1_RAND1} except the algorithm of XCBv1 uses three independent random permutation $\pi_1, \pi_2, \pi_3$ instead of the block cipher implementation. We denote this as follows:
\begin{equation}
\Pr\bigg[A^{\textbf{E}_{\tilde{\pi},h},\textbf{D}_{\tilde{\pi},h}}\Rightarrow 1\bigg] = \Pr\bigg[A^\text{XCB1} \Rightarrow 1\bigg].
\end{equation}

\begin{figure}
	\begin{small}
		\begin{center}
			{\begin{adjustbox}{width=1\textwidth}
					\begin{tabular}{|c|}
						\hline\\
						\begin{minipage}{430pt}
							Subroutine Ch-$\pi_i(X) (i = 1, 2, 3)$\\
							11. $ Y \rand \nbits$; \textbf{if}\;$ Y \in Range_i$ \textbf{then} {\sf bad}  $\la$ \textbf{true};
							\fbox{$Y \rand \overline{Range_i}$}; \textbf{end if} \\
							12. \textbf{if} $X \in Domain_i$ \textbf{then \text{{\sf bad}}} $\la$ true; \fbox{$Y \la \pi_i(X)$}; \textbf{end if} \\
							13.  $\pi_i(X) \la Y; Domain_i \la Domain_i \cup \{X\}$; $ Range_i \la Range_i  \cup (Y)$; \textbf{return} (Y);\\\\
							Subroutine Ch-$\pi_i^{-1}(Y) $\\  
							14.  $ X \rand \nbits$; \textbf{if}\;$ X \in Domain_i$ \textbf{then} \textbf{{\sf bad}} $\la$ \textbf{true};
							\fbox{$X \rand \overline{Domain_i}$}; \textbf{end if} \\
							15. \textbf{if} $Y \in Range_i$ \textbf{then \text{{\sf bad}}} $\la$ true; \fbox{$X \la \pi_i^{-1}(Y)$}; \textbf{end if} \\
							16.  $\pi_i(X) \la Y; Domain_i \la Domain_i \cup \{X\}$; $ Range_i \la Range_i  \cup (Y)$; \textbf{return} (X);\\
							
							\underline{Initialization:}\\
							17.  \textbf{for all} $X \in \nbits \pi_i(X)$ = undef \textbf{end for}\\
							18. {\sf bad} = false\\
							19. $h_1, h_2 \rand \nbits$\\
						\end{minipage}\\
						\hline
						\begin{minipage}{430pt}
							Respond to the $s^{th}$ query as follows:
						\end{minipage}\\
						\hline
						\begin{minipage}{430pt}
							\begin{tabular}{l|l}
								\underline{Encipher query:} $Enc(T^s; P_1^s, P_2^s,\dotsc,P_{m_s}^s)$ &\underline{Decipher query:} $Dec(T^s; C_1^s, C_2^s, \dotsc, C_{m^s}^s)$\\\\
								
								101. \textbf{if}  $P_{1}^s = P_{1}^{s'}$ \textbf{for} $s' < s$ \textbf{then} & 101. \textbf{if} $C_{1}^s = C_{1}^{s'}$ \textbf{for} s' < s \textbf{then}\\
								102. $CC^s \la CC^{s'}$& 102. $MM^s \la MM^{s'}$\\
								103. \textbf{else} & 103. \textbf{else}\\
								104. $CC^s \la \text{Ch-}\pi_1(P_{1}^s)$ & 104. $MM^s \la \text{Ch-}\pi_3(C^s_{m^s})$\\
								105. \textbf{end if} & 105. \textbf{end if}\\
								106. $S^s \la CC^s \oplus H_{h_1}(P_2^s \| \dotso \| P_{m^s}^s,T^s)$ & 106. $S^s \la MM^s \oplus H_{h_2}(C_2^s \| \dots \|C_{m^s}^s, T^s)$\\
								107. \textbf{for} $i = 1$ to $m^s$ -1 & 107. \textbf{for} $i = 1$ to $m^s$-1\\
								
								108. $Z_i^s \la \text{Ch-}\pi_2(\text{inc}^i(S^s))$ & 108. $C_{i+1}^s \la P_{i+1}^s \oplus Z_i^s$\\
								109. $C_{i+1}^s \la P_{i+1}^s \oplus Z_i^s$ & 109. $P_{i+1}^s \la C_{i+1}^s \oplus Z_i^s$\\
								110. \textbf{end for} & 110. \textbf{end for}\\
								111. $MM^s \la S^s \oplus H_{h_2}(C_2^s \| \dots \|C_{m^s}^s, T^s)$ & 111. $CC^s \la S^s \oplus  H_{h_1}(P_2^s \| \dotso \| P_{m^s}^s,T^s)$\\
								112. $C_1^s \la Ch-\pi_3^{-1}(MM^s)$ & 112. $P_{1}^s \la Ch-\pi_1^{-1}(CC^s)$\\
								113. \textbf{return} $(C_1^s, C_2^s, \dotsc, C_{m^s}^s)$ & 113. \textbf{return} $(P_1^s, P_2^s, \dotsc, P_{m^s}^s)$\\ 
							\end{tabular}
						\end{minipage}\\
						\hline
					\end{tabular}
				\end{adjustbox}
			}
			\caption{\label{fig : XCB1_RAND1} Games XCB1 and RAND1 : In RAND1 the boxed entries are removed.}
		\end{center}
	\end{small}
	
\end{figure}

\begin{figure}
	
	\begin{center}
		{\footnotesize
			\begin{adjustbox}{width=1\textwidth}
				\begin{tabular}{|c|}
					\hline
					\begin{minipage}{430pt}
						\begin{tabular}{l}
							Respond to the $s^{th}$ adversary query as follows:
							\\
							ENCIPHER QUERY $\textbf{Enc}(T^s; P^s)$\\
							10. $(P_1^s, P_2^s, \dotsc, P_{m^s}^s) \la \text{parse}_n(P^s)$\\
							11. $ty^s = \textbf{Enc}$\\
							12. $C_1^s\|C_2^s\|\dotso\|C_{m^s-1}^s\|D_{m^s}^s \rand
							\{0,1\}^{nm^s}$\\
							13. $C_{m^s}^s \la \text{\sf drop}_{n-r^s}(D_{m^s}^s)$\\
							14. \textbf{return} $C_1^s\|C_2^s\|\dotsc\|C_{m^s}^s$\\
							DECIPHER QUERY $\textbf{Dec}(T^s; C^s)$\\
							20. $(C_1^s, C_2^s,\dotsc, C_{m^s-1}^s, C_{m^s}^s) \la \text{parse}_n(C^s)$\\
							21. $ty^s = \textbf{Dec}$\\
							22.	$P_1^s\|P_2^s\|\dotsc\|P_{m^s-1}^s\|V_{m^s}^s \rand \{0,1\}^{nm^s}$\\
							23. $P_{m^s}^s \la \text{\sf drop}_{{n-r}^s}(V_{m^s})$\\
							24. \textbf{return} $P_1^s\|P_2^s\|\dotso\|P_{m^s}^s$
						\end{tabular}
					\end{minipage}\\
					\hline
					\begin{minipage}{430pt}
						\textbf{Finalization:}\\
						001. $h_1 \rand \nbits$;\qquad
						002. $h_2 \rand \nbits$
					\end{minipage}\\
					\hline
					\begin{minipage}{430pt}		
						\begin{tabular}{l|l}			
							\textbf{for} s = 1 to q\\
							\qquad \textbf{if}\;$ty^s$ = \textbf{Enc} \textbf{then} & \quad \textbf{else if} $ty^s$ = \textbf{Dec}:\\
							101. \textbf{if} $P_{1}^s = P_{1}^{s'}$ \textbf{for} s' < s \textbf{then} & 201. \textbf{if} $C_1^s = C_{1}^{s'}$ \textbf{for} s' < s then\\
							
							102. \quad$CC^s \la CC^{s'}$ & 202. $MM^s \la MM^{s'}$ \\
							
							103. \textbf{else} & 203. \textbf{else}\\
							
							104. \quad $CC^s \rand \nbits$ & 204. \quad$MM^s \rand \nbits$\\
							
							105. \quad$\mathcal{D}_1 \la \mathcal{D}_1 \cup \{P_{1}^s\}$ & 205. \quad$\mathcal{D}_3 \la \mathcal{D}_3 \cup \{C_{1}^s\}$\\
							
							106. \quad$\mathcal{R}_1 \la  \mathcal{R}_1 \cup \{CC^s\}$ & 206. \quad$\mathcal{R}_3 \la \mathcal{R}_3 \cup \{MM^s\}$\\
							107. \textbf{end if} & 207. \textbf{end if}\\
							
							108. $S^s \la CC^s \oplus H_{h_1}(P_2^s \| \dotso \| P_{m^s}^s,T^s)$ & 208. $S^s \la MM^s \oplus H_{h_2}(C_2^s \| \dotso \| C_{m^s}^s,T^s)$\\
							
							109. $MM^s \la S^s \oplus H_{h_2}(C_2^s \| \dotso \| C_{m^s}^s,T^s)$ & 209. $CC^s \la S^s \oplus H_{h_1}(P_2^s \| \dotso \| P_{m^s}^s,T^s)$\\
							
							110. $\mathcal{D}_3 \la \mathcal{D}_3 \cup \{C_{1}^s\}$ & 210. $\mathcal{D}_1 \la \mathcal{D}_1 \cup \{P_{1}^s\}$\\
							
							111. $\mathcal{R}_3 \la \mathcal{R}_3 \cup \{MM^s\}$ & 211. $\mathcal{R}_1 \la \mathcal{R}_1 \cup \{CC^s\}$\\
							
							112. \textbf{for} $i = 0$ to $m^s-3$, & 212. \textbf{for} $i = 0$ to $m^s-3$,\\
							
							113. \quad$Y_i^s \la C_{i+2}^s \oplus P_{i+2}^s$ & 213. \quad$Y_i^s \la C_{i+2}^s \oplus P_{i+2}^s$\\
							
							114. \quad$\mathcal{D}_2 \la \mathcal{D}_2 \cup \{\text{inc}^i(S^s)\}$ & 214. \quad$\mathcal{D}_2 \la \mathcal{D}_2 \cup \{\text{inc}^i(S^s)\}$\\
							
							115. \quad$\mathcal{R}_2 \la \mathcal{R}_2 \cup \{Y_i^s\}$ & 215. \quad$\mathcal{R}_2 \la  \mathcal{R}_2 \cup \{Y_i^s\}$\\
							
							116. \textbf{end for} & 216. \textbf{end for}\\
							117.  $	Y_{m^s-2}^s \la \text{{\sf pad}}(P_{m^s}^s) \oplus D_{m^s}^s$ & 217.  $	Y_{m^s-2}^s \la \text{{\sf pad}}(C_{m^s}^s) \oplus V_{m^s}^s$\\	
							
							118. $\mathcal{D}_2 \la \mathcal{D}_2 \cup \{\text{inc}^{m^s-2}(S^s)\}$ & 218. $\mathcal{D}_2 \la \mathcal{D}_2 \cup \{\text{inc}^{m^s-2}(S^s)\}$\\
							
							119. ${\sf \mathcal{R}}_2 \la {\sf \mathcal{R}}_2 \cup \{Y_{m^s-2}^s\}$ & 219. ${\sf \mathcal{R}}_2 \la {\sf \mathcal{R}}_2 \cup \{Y_{m^s-2}^s\}$\\
							&\qquad \textbf{end if}\\
							&\textbf{end for}
						\end{tabular}
					\end{minipage}\\
					\hline
					\begin{minipage}{430pt}
						SECOND PHASE\\
						\quad\text{{\sf bad} = false};\\
						\quad \textbf{if}(some value occurs more than once in 
						$\mathcal{D}_{i}, i = 1, 2, 3) \textbf{then {\sf bad}} = true$ \textbf{end if;}\\
						\quad \textbf{if}(some value occurs more than once in 
						$\mathcal{R}_{i}, i = 1, 2, 3) \textbf{then {\sf bad}} = true$ \textbf{end if.}\\
					\end{minipage}\\
					\hline
				\end{tabular}
			\end{adjustbox}
		}
	\end{center}
	\caption{\label{fig : XCBv1_RAND2}Game RAND2 for XCBv1}
\end{figure}
\textit{Game RAND1} is also described in \autoref{fig : XCB1_RAND1} with the boxed entries removed. In this game it is not guaranteed that $\pi_i(i = 1, 2, 3)$ are permutation as though we do the consistency checks but we don't reset the values of $Y$ (in Ch-$\pi_i$) and $X$ (in Ch-$\pi_i^{-1}$). Thus, the games XCB1 and RAND1 are identical apart from what happens when the \text{{\sf bad}} flag is set. By the fundamental lemma of game-ploting or difference lemma, we have
\begin{equation}
\bigg|\Pr\bigg[A^\text{XCB1} \Rightarrow 1\bigg] - \Pr\bigg[A^{RAND1} \Rightarrow 1\bigg] \bigg| \leq \Pr\bigg[A^{RAND1} \; \text{set} \; \text{{\sf bad}}\bigg].
\end{equation}
Here, we see RAND1 gives the random string in response of encryption and decryption queries. So,
\begin{equation}
\Pr\bigg[A^{RAND1} \Rightarrow 1\bigg] = \Pr\bigg[A^{(\cdot,\cdot),(\cdot,\cdot)} \Rightarrow 1\bigg].
\end{equation}
Now, by using the definition 
\begin{eqnarray}
\nonumber
\textbf{Adv}^{\pm rnd}_{\text{XCBv1[3Perm}(n)]}(A) &=& 
\bigg| \Pr\bigg[A^{\textbf{E}_{\pi_1, \pi_2, \pi_3}, \textbf{D}_{{\pi_1, \pi_2, \pi_3}}} \Rightarrow 1\bigg]  \\\nonumber
&& - \Pr\bigg[A^{(\cdot,\cdot),(\cdot,\cdot)} \Rightarrow 1\bigg]\bigg|\\\nonumber
&=& \bigg|\Pr\bigg[A^\text{XCB1} \Rightarrow 1\bigg]\\\nonumber
&& - \Pr[A^{RAND1} \Rightarrow 1\bigg] \bigg|\\
&\leq&\label{15} \Pr\bigg[A^{RAND1} \; \text{set} \; {\sf bad}\bigg]. 
\end{eqnarray}

\textit{Game RAND2} is slightly different from RAND1, in this permutation is not maintained, just a random string of appropriate length in response of an encryption/decryption query is returned. In the finalisation step  of game, the internal variable are adjusted and the appropriate variables are inserted in the multi sets $ \mathcal{D}_1, \mathcal{D}_2, \mathcal{D}_3$ and $\mathcal{R}_1, \mathcal{R}_2, \mathcal{R}_3$. If collision occurs in these multi sets then the {\sf bad} flag is set.

Games RAND1 and RAND2 are indistinguishable to the adversary as both returns the random strings in response to queries. And also for both the cases, probability for which RAND1 and RAND2 have {\sf bad} flag set is same. Therefore, we can write:

\begin{equation}\label{16}
\Pr[A^{RAND1} \; \text{set}\; {\sf bad}] = \Pr[A^{RAND2} \; \text{set}\; {\sf bad}].  
\end{equation}
Thus, from equations (\ref{15}) and (\ref{16})
\begin{equation}\label{idea complete}
\textbf{Adv}^{\pm rnd}_{\text{XCBv1[3Perm}(n)]}(A) \leq \Pr[A^{RAND2} \; \text{set} \; {\sf bad}]. 
\end{equation}

So, our goal is to bound $\Pr[A^{RAND2} \; \text{set} \; {\sf bad}]$. If there is a collision in these multi sets in Game RAND2 then the {\sf bad} flag is set. So if $\ {\sf COLLD}_i$ and ${\sf COLLR}_i$ denote the events of a collision in $\mathcal{D}_i$ and $\mathcal{R}_i$ respectively then we have
\begin{equation}\label{18}
\Pr[A^{RAND2} \; \text{set} \; {\sf bad}] \leq \sum_{1 \leq i \leq 3}(\Pr[{\sf COLLR}_i] + \Pr[{\sf COLLD}_i]).
\end{equation}
In the rest of the section we analyze the collision probabilities in the sets $\mathcal{D}_i$ and $\mathcal{R}_i$. After $q$ queries of the adversary where the $s^{th}$ query has $m^s$ blocks of plaintext or ciphertext and $t^s$ blocks of tweak, then the sets $\mathcal{D}_i$ and $\mathcal{R}_i$ can be written as follows:\\\\
$\mathcal{D}_1 = \{P_1^s : 1 \leq s \leq q\},$\\
$\mathcal{D}_2 = \bigcup_{s=1}^q\{inc^j(S^s):0 \leq j \leq m^s-2\},$\\
$\mathcal{D}_3 =\{C_1^s : 1 \leq s \leq q\},$\\\\
$\mathcal{R}_1 =\{CC^s : 1 \leq s \leq q\},$\\
$\mathcal{R}_2 = \bigcup_{s=1}^q\{Y_j^s = C_{j+2}^s \oplus P_{j+2}^s :0 \leq j \leq m^s-3\},$\\
$\mathcal{R}_3 = \{MM^s : 1 \leq s \leq q\}.$\\\\

Following are the points which will help in the analysis:
\begin{enumerate}
	\item For the $s^{th}$ query $ty^s \in \{enc, dec\}$ will denote whether the query is an encryption or a decryption query.
	\item In each query, the adversary specifies a tweak $T^s$, we consider $t^s = \lceil(\left|T^s\right|/n)\rceil$. Thus, for any s, $H_{h_1}$ and $H_{h_2}$ in line 108 and  109 respectively ($H_{h_1}$ and $H_{h_2}$ in line 209 and 208 respectively for decryption) for encryption of game RAND2 has degree at most $m^s + t^s$. We denote $\sigma_n = \sum_{s}t^s + \sum_{s}m^s$. We denote $\ell = \text{max}\{m^s + t^s,m^{s'} + t^{s'}\}$.
	\item In game RAND2 the hash key $h_1$ and $h_2$ are selected uniformly at random from $\nbits$.
	\item For an encryption query, the response received by $A$ is $(C_1^s, C_2^s,\dotsc,C_{m^s}^s)$ and for a decryption query the response received is $(P_1^s, P_2^s,\dotsc,P_{m^s}^s)$. Both these responses are uniformly distributed and independent of other variables.
\end{enumerate}
In the following claims we bound the required collision probabilities.
\begin{claim}\label{claim:5}
	$\Pr[{\sf COLLD}_1] \leq \binom{q}{2}/{2^n}.$
	\begin{proof}	
		In case of encryption i.e. $ty^s = ty^{s'} =$\textit{enc} then $\Pr[P_{1}^s = P_{1}^{s'}] = 0$ because of the condition in line 101 of RAND2.
		In case of at least one decryption. Without loss of generality, if $ty^s$ = \textit{dec}, then $P_1^s$ is a uniform $n$-bit string, hence $\Pr[P_1^s = P_1^{s'}] = 1/2^n$.\\
		For $q$ queries, $\Pr[{\sf COLLD}_1] \leq \binom{q}{2}/2^n.$
	\end{proof}
\end{claim}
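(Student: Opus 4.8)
The plan is to bound $\Pr[{\sf COLLD}_1]$ by a union bound over the $\binom{q}{2}$ unordered pairs of query indices, after first understanding how each first block $P_1^s$ comes to sit in the multiset $\mathcal{D}_1$ during Game RAND2. The relevant observation is that $P_1^s$ enters $\mathcal{D}_1$ by two distinct mechanisms: on a decryption query, $P_1^s$ is a segment of the uniformly random response drawn in line 22 and is inserted unconditionally (line 210), whereas on an encryption query, $P_1^s$ is chosen by the adversary and is inserted (line 105) only through the \emph{else} branch, i.e.\ only when it differs from every earlier first block, since a repeated value is short-circuited to the stored $CC^{s'}$ in line 102.

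Then I would fix a pair $s' < s$ and split on the types $ty^{s},ty^{s'}$. If both are encryption queries, the consistency test of line 101 prevents a repeated value from ever being inserted a second time, so the two relevant entries of $\mathcal{D}_1$ cannot be equal and this case contributes $0$. In every remaining case at least one of the two queries is a decryption query; without loss of generality let it be query $s$, whose first block $P_1^s$ is then a uniform $n$-bit string sampled independently of the other block $P_1^{s'}$. Hence $\Pr[P_1^s = P_1^{s'}] = 1/2^n$ in this case, and this estimate is valid regardless of whether query $s'$ is itself an encryption or a decryption query.

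Summing these per-pair estimates over all $\binom{q}{2}$ pairs then gives $\Pr[{\sf COLLD}_1] \le \binom{q}{2}/2^n$, which is the claim. I expect the only genuinely delicate point to be the independence argument in the decryption case: one must check that the uniform first block returned as part of a random response is independent both of the other query's first block and of the adversary's view used to choose its (possibly adaptive) inputs. This holds because in RAND2 every response is a fresh uniform string handed to the adversary before the finalization bookkeeping, so conditioning on the transcript does not bias the value of $P_1^s$.
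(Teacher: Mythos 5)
Your proof is correct and takes essentially the same route as the paper's: a case split on the query types, with the both-encryption case contributing zero because of the consistency check in line 101, the at-least-one-decryption case contributing $1/2^n$ because the decryption response's first block is a fresh uniform $n$-bit string, and a union bound over the $\binom{q}{2}$ pairs. Your extra remarks on how $P_1^s$ actually enters $\mathcal{D}_1$ and on the independence of the uniform response from the adversary's transcript simply make explicit what the paper leaves implicit.
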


\begin{claim}\label{claim:6}
	$\Pr[{\sf COLLD}_2] \leq (\ell q \sigma_n)w_{max}/2^n.$
	\begin{proof}
		$\mathcal{D}_2 = Z_1 \cup Z_2 \cup \dotso \cup Z_q $, where\\
		\qquad$Z_s = \{\text{inc}^j(S^s) : 0 \leq j \leq m^s-2\},$ for $1\leq s \leq q$
		and 
		\[S^s =
		\begin{cases}
		CC^s \oplus H_{h_1}(P_2^s \| \dotso \| P_{m^s}^s, T^s) & \text{if}\; ty^s = enc,\\
		MM^s \oplus H_{h_2}(C_2^s \| \dots \|C_{m^s}^s, T^s) & \text{if}\; ty^s = dec.
		\end{cases}
		\]
		
		As $Z_s$ contains distinct elements, if $ \forall x_1, x_2 \in Z_s$ then $\Pr[x_1 =  x_2] = 0$.
		We need to bound collision between $x1$ and $x2$ when these are in different set then without loss of generality $x_1 \in Z_s$  and $x_2 \in Z_{s'}$, for $s \neq s'$.
		For $s \neq s'$, we define ${\sf COLL}(Z_s, Z_{s'})$  as the event that at least one element of $Z_s$ collide with one element of $Z_{s'}$. Hence, we have
		\begin{equation}\label{27}
		\Pr[{\sf COLLD}_2] \leq \sum_{1 \leq s < s' \leq q} \Pr[{\sf COLL}(Z_s, Z_{s'})].
		\end{equation}
		We also have\\
		\[\text{inc}^j(S^s) \oplus \text{inc}^{j'}(S^{s'}) =
		\begin{cases}
		\text{inc}^{j-j'}(S^s) \oplus S^{s'} & \text{if}\; j \geq j'.\\
		S^{s}  \oplus \text{inc}^{j'-j}(S^{s'}) & \text{if}\; j < j'.
		\end{cases}
		\]
		Now we define the following events
		\begin{equation}
		U_i \equiv \text{inc}^{i}(S^s) \oplus S^{s'} = 0, \quad \text{for}\; 0 \leq i \leq m^s-2,
		\end{equation}
		\begin{equation}
		V_i \equiv S^{s}  \oplus \text{inc}^{i}(S^{s'}) = 0 \quad \text{for}\; 0 \leq i \leq m^{s'}-2.
		\end{equation}
		Hence,
		\begin{equation}\label{30}
		{\sf COLL}(Z_s, Z_{s'}) = \Bigg(\bigvee_{i=0}^{m^s-2}U_i\Bigg)\bigvee \Bigg(\bigvee_{i=0}^{m^{s'}-2}V_i\Bigg).
		\end{equation}
		Now we will bound $\Pr[U_i]$. We assume $s<s'$. For computing $\Pr[\text{inc}^i(S^s)\oplus S^{s'} = 0],$\\
		In case of encryption of both message, where $P_1^s \neq P_{1}^{s'}.$  $CC^s$ is chosen randomly hence for any $i$
		\begin{equation}\label{31}
		\Pr[\text{inc}^i(S^s)\oplus S^{s'} = 0] = \frac{1}{2^n}.
		\end{equation}
		Similarly, in case of decryption of both the message where $MM^s$ is chosen randomly hence the same probability of (\ref{31}) holds.\\
		
		If one of them is decryption and other one is encryption. Without loss of generality, if $ty^s$ = \textit{dec} then both $S^s$ and $S^{s'}$ are polynomials of $h_1$ or $h_2$ of degree at most $\ell = \text{max}\{m^s + t^s,m^{s'} + t^{s'}\}$
		Then using Lemma \ref{lem:2}, we have
		\begin{equation}\label{111}
		\Pr\bigg[\bigvee_{i=0}^{m^s-2}U_i\bigg] \leq \dfrac{w_{max}{\ell}(m^s-1)}{2^n}.
		\end{equation}
		Similarly, we have
		\begin{equation}\label{222}
		\Pr\bigg[\bigvee_{i=0}^{m^s-2}V_i\bigg] \leq \dfrac{w_{max}{\ell}(m^s-1)}{2^n}.
		\end{equation}
		Thus, using equations (\ref{111}) and (\ref{222}), we have
		\begin{equation}\label{34}
		\Pr[{\sf COLL}(Z_s,Z_{s'})] \leq \dfrac{w_{max}(m^s+m^{s'}-2)\ell}{2^n}.
		\end{equation}
		Using equations (\ref{34}) and (\ref{27}) we have
		\begin{eqnarray}\nonumber
		\Pr[{\sf COLLD}_2] &\leq& \sum_{1 \leq s < s' \leq q}\frac{w_{max}(m^s+m^{s'}-2)\ell}{2^n}\\\nonumber
		&\leq& \frac{\ell w_{max}}{2^n}\sum_{1 \leq s < s' \leq q}(m^s+m^{s'})\\\nonumber
		&\leq& \frac{\ell w_{max}q \sigma_n}{2^n}.
		\end{eqnarray}
	\end{proof}
\end{claim}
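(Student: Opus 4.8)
The plan is to bound the collision probability in $\mathcal{D}_2$ by exploiting its natural decomposition into per-query pieces. Writing $\mathcal{D}_2 = \bigcup_{s=1}^q Z_s$ with $Z_s = \{\text{inc}^j(S^s) : 0 \leq j \leq m^s - 2\}$, I would first observe that the iterates $\text{inc}^0(S^s), \dots, \text{inc}^{m^s-2}(S^s)$ inside a single $Z_s$ are pairwise distinct, since the counter span $m^s - 2$ stays below the wrap-around period $2^{32}$ under the length restriction on queries. Hence no collision can occur within one query, and a union bound over the $\binom{q}{2}$ unordered pairs reduces the whole problem to bounding $\Pr[{\sf COLL}(Z_s, Z_{s'})]$, the event that some element of $Z_s$ equals some element of $Z_{s'}$, for each fixed pair $s < s'$.

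For a single pair I would rewrite a cross collision $\text{inc}^j(S^s) = \text{inc}^{j'}(S^{s'})$ by cancelling the common action of $\text{inc}$ on the low $32$ bits: this yields $\text{inc}^{j-j'}(S^s) = S^{s'}$ when $j \geq j'$ and $S^s = \text{inc}^{j'-j}(S^{s'})$ when $j < j'$. This packages the collision event as a disjunction of the shifted events $U_i \equiv (\text{inc}^i(S^s) \oplus S^{s'} = 0)$ for $0 \leq i \leq m^s-2$ together with the symmetric events $V_i$, so that ${\sf COLL}(Z_s, Z_{s'}) = \big(\bigvee_i U_i\big) \vee \big(\bigvee_i V_i\big)$ and it suffices to bound each disjunction.

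The bound then splits by the query types $(ty^s, ty^{s'})$. If both queries are encryptions with $P_1^s \neq P_1^{s'}$, then the corresponding $CC^s$ is a fresh uniform string independent of $S^{s'}$ and of the hash key, so each $U_i$ has probability exactly $2^{-n}$ and the disjunction costs at most $(m^s-1)/2^n$; the all-decryption case is identical with $MM^s$ playing the role of $CC^s$. The genuinely delicate case is the mixed one, where both $S^s$ and $S^{s'}$ are nonzero polynomials in the hash key(s) of degree at most $\ell = \max\{m^s + t^s, m^{s'} + t^{s'}\}$. Here a naive term-by-term union bound over the $m^s-1$ shifts would pay a factor $\ell(m^s-1)$, and the essential move is to invoke Lemma \ref{lem:2}, whose use of the sets $\mathbb{W}_r$ and the bound $w_{max} \leq 32$ collapses the disjunction to $w_{max}\ell(m^s-1)/2^n$ (with Lemma \ref{szl} supplying the per-term degree bound). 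I expect this mixed case to be the main obstacle: the care lies in matching the Schwartz--Zippel degree accounting to the $\text{inc}$-structure so that Lemma \ref{lem:2} applies cleanly, rather than in any of the surrounding bookkeeping.

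Finally I would combine the two disjunctions into $\Pr[{\sf COLL}(Z_s, Z_{s'})] \leq w_{max}(m^s + m^{s'} - 2)\ell/2^n$, sum over all pairs, and use $\sum_{s<s'}(m^s + m^{s'}) = (q-1)\sum_s m^s \leq q\sigma_n$ to obtain $\Pr[{\sf COLLD}_2] \leq w_{max}\ell q \sigma_n/2^n$, as claimed.
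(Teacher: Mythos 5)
Your proposal is correct and follows essentially the same route as the paper's proof: the same decomposition $\mathcal{D}_2 = \bigcup_s Z_s$, the same reduction to pairwise events ${\sf COLL}(Z_s,Z_{s'})$ via the shifted events $U_i, V_i$, the same case split on query types with the uniform $CC^s$/$MM^s$ argument, and the same appeal to Lemma \ref{lem:2} (with $w_{max}$) for the mixed case, summed identically over pairs. The only addition is your explicit justification of within-query distinctness via the $2^{32}$ wrap-around period, which the paper simply asserts.
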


\begin{claim}\label{claim:7}
	$\Pr[{\sf COLLD}_3] \leq \binom{q}{2}/2^n.$\\
	The proof is similar to the proof of the Claim \ref{claim:5}.
\end{claim}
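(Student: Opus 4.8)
The plan is to mirror the argument of Claim~\ref{claim:5}, exploiting the duality between $\mathcal{D}_1 = \{P_1^s\}$ and $\mathcal{D}_3 = \{C_1^s\}$ under the exchange of the encipher and decipher roles. Recall that ${\sf COLLD}_3$ is the event that $C_1^s = C_1^{s'}$ for some pair $s \neq s'$, and that by a union bound it suffices to bound $\Pr[C_1^s = C_1^{s'}]$ for each of the $\binom{q}{2}$ pairs and then sum.

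First I would split into cases according to the query types $ty^s, ty^{s'} \in \{enc, dec\}$. If both queries are decipher queries, then $C_1^s$ and $C_1^{s'}$ are adversary-supplied inputs, and the consistency test in line 201 of Game RAND2 guarantees that whenever $C_1^s = C_1^{s'}$ for some $s' < s$ the value is not inserted afresh into $\mathcal{D}_3$ (the game instead reuses $MM^{s'}$ via line 202, and the insertion in line 205 sits inside the \textbf{else} branch). Hence no collision is ever recorded from such a pair, and its contribution is $0$ --- exactly the dual of the fact that $\Pr[P_1^s = P_1^{s'}] = 0$ for two encipher queries in Claim~\ref{claim:5}.

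Next, if at least one of the two queries is an encipher query, say $ty^s = enc$ without loss of generality, then $C_1^s$ is the first block of the random ciphertext sampled in line 12 of RAND2, hence a uniform $n$-bit string independent of $C_1^{s'}$. Therefore $\Pr[C_1^s = C_1^{s'}] = 1/2^n$. Summing $1/2^n$ over all $\binom{q}{2}$ unordered pairs then yields $\Pr[{\sf COLLD}_3] \leq \binom{q}{2}/2^n$, as claimed.

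The argument is essentially routine once the duality with Claim~\ref{claim:5} is set up; the only point requiring genuine care is to verify that in the all-decipher case the line 201 check indeed prevents any duplicate $C_1$ from being counted, so that those pairs contribute $0$ rather than $1/2^n$. This is the mild analog of the ``both encipher'' case in Claim~\ref{claim:5}, and is the single place where the bookkeeping structure of Game RAND2 must actually be invoked.
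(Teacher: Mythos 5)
Your proof is correct and is precisely the argument the paper intends: the paper's proof of Claim~\ref{claim:7} is just ``similar to Claim~\ref{claim:5},'' and your dualization (line~201 forcing zero contribution for two decipher queries, and the uniform random ciphertext block from line~12 giving $1/2^n$ when at least one query is an encipher query, followed by a union bound over $\binom{q}{2}$ pairs) is exactly that similarity spelled out. Nothing is missing.
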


\begin{claim}\label{claim:8}
	$\Pr[{\sf COLLR}_1] \leq \frac{(q-1)\sigma_n}{2^n}.$	
	\begin{proof}
		In case $P_1^s \neq P_1^{s'}$ and  $ty^s = ty^{s'} = enc$ then $CC^s$ and $CC^{s'}$ are selected uniformly at random from $\nbits$. Then, $\Pr[CC^s = CC^{s'}] = 1/2^n$.\\
		In case
		If $ty^s = dec$, then
		$$CC^s = MM^s \oplus H_{h_2}(R^s, T^s) \oplus H_{h_1}(Q^s, T^s),$$
		where
		$Q^s = P_2^s\|P_3^s\|\dotso\|P_{m^s}^s$ and $R^s = C_2^s\|C_3^s\|\dotso\|C_{m^s}^s$.\\
		Now, we have the following two cases to solve:\\
		
		\textbf{Case I:} When $ty^{s'} = enc$ then $CC^s$ is selected uniformly and independently from $\nbits$, then $\Pr[CC^s = CC^{s'}] \leq 1/2^n$.\\
		
		\textbf{Case II:} When $ty^{s'} = dec$ and $MM^s = MM^{s'}$. In this case we have
		$$ CC^s \oplus CC^{s'} =  \big[H_{h_2}(R^s, T^s) \oplus H_{h_2}(R^{s'}, T^{s'})\big] \oplus \big[H_{h_1}(Q^s, T^s) \oplus H_{h_1}(Q^{s'}, T^{s'})\big].$$
		Let
		
		$$H_2^{s,s'} = H_{h_2}(R^s, T^s) \oplus H_{h_2}(R^{s'}, T^{s'}),$$
		$$H_1^{s,s'} = H_{h_1}(Q^s, T^s) \oplus H_{h_1}(Q^{s'}, T^{s'}).$$
		
		Note that $H_1^{s,s'} \oplus H_2^{s,s'}$ is non-zero bivariabe polynomial on $h_1, h_2$ with (total) degree $\ell$. Hence, from Schwartz-Zippel Lemma
		$$\Pr[CC^s \oplus CC^{s'} = 0] \leq \frac{\ell}{2^n}.$$
		Therefore, we have
		\begin{eqnarray}
		\Pr[{\sf COLLR}_1] &\leq& \sum_{1 \leq s \leq s' \leq q} \frac{\ell}{2^n}\\\nonumber
		&\leq& \frac{(q-1)\sigma_n}{2^n}.
		\end{eqnarray}
	\end{proof}
\end{claim}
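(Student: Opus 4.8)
The plan is to bound $\Pr[{\sf COLLR}_1]$, the probability that two entries of $\mathcal{R}_1=\{CC^s:1\le s\le q\}$ coincide, by a union bound over all pairs $s<s'$ of the collision probability $\Pr[CC^s=CC^{s'}]$, and to control each term by a case analysis on the types $ty^s,ty^{s'}\in\{enc,dec\}$. The organizing principle is to locate, in each case, the fresh uniform randomness that Game RAND2 injects into $CC^s$: when the $s$-th query is an encryption with a new first block, $CC^s$ is drawn uniformly and independently; when it is a decryption, $CC^s$ is the \emph{derived} value $CC^s=MM^s\oplus H_{h_2}(R^s,T^s)\oplus H_{h_1}(Q^s,T^s)$ with $Q^s=P_2^s\|\cdots\|P_{m^s}^s$, $R^s=C_2^s\|\cdots\|C_{m^s}^s$, and $MM^s$ itself freshly uniform.

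First I would dispose of the easy cases. If both queries are encryptions with distinct first blocks, then $CC^s$ and $CC^{s'}$ are independent uniform strings, so $\Pr[CC^s=CC^{s'}]=1/2^n$. If exactly one is a decryption, or if both are decryptions but $MM^s\ne MM^{s'}$, then at least one of the two values still carries a fresh uniform term (a freshly sampled $CC$ or a freshly sampled $MM$) that does not appear in the other; conditioning on everything else, $CC^s\oplus CC^{s'}$ is uniform and the collision probability is at most $1/2^n$.

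The main obstacle is the remaining case in which both queries are decryptions and $MM^s=MM^{s'}$, which is exactly what the inheritance step (triggered by $C_1^s=C_1^{s'}$) can force. Here the fresh $MM$ terms cancel and $CC^s\oplus CC^{s'}=H_2^{s,s'}\oplus H_1^{s,s'}$, where $H_2^{s,s'}=H_{h_2}(R^s,T^s)\oplus H_{h_2}(R^{s'},T^{s'})$ and $H_1^{s,s'}=H_{h_1}(Q^s,T^s)\oplus H_{h_1}(Q^{s'},T^{s'})$. The delicate step is to argue that $H_1^{s,s'}\oplus H_2^{s,s'}$ is a \emph{non-zero} bivariate polynomial in the independent hash keys $(h_1,h_2)$ of total degree at most $\ell$: since the adversary never repeats a query and $C_1^s=C_1^{s'}$ already holds, the two queries must differ in $(R,T)$ or in $(Q,T)$, so at least one of $H_1^{s,s'},H_2^{s,s'}$ contributes a non-trivial monomial, and because they involve the two independent indeterminates $h_1$ and $h_2$ no cancellation between them is possible. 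Applying the Schwartz--Zippel Lemma (Lemma~\ref{szl}) with $h_1,h_2\rand\nbits$ then yields $\Pr[CC^s\oplus CC^{s'}=0]\le \ell/2^n$.

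Finally I would collect the cases: every pair contributes at most $\ell/2^n$, and summing over the $\binom{q}{2}$ pairs while bounding the accumulated degrees by $\sigma_n$ gives $\Pr[{\sf COLLR}_1]\le (q-1)\sigma_n/2^n$, as claimed. The only genuinely non-routine ingredient is the non-vanishing of the bivariate hash-difference polynomial in the both-decryption case; everything else is bookkeeping of where RAND2's randomness is injected.
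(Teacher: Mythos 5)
Your proposal is correct and follows essentially the same route as the paper: the same case analysis on $(ty^s, ty^{s'})$, the same reduction of the both-decryption case with $MM^s = MM^{s'}$ to the non-vanishing of the bivariate polynomial $H_1^{s,s'} \oplus H_2^{s,s'}$ in $(h_1,h_2)$, the same application of the Schwartz--Zippel lemma, and the same summation over pairs to reach $(q-1)\sigma_n/2^n$. If anything, you are slightly more careful than the paper in explicitly disposing of the both-decryption case with $MM^s \neq MM^{s'}$, which the paper leaves implicit.
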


\begin{claim}\label{claim:9}
	$\Pr[{\sf COLLR}_2] \leq \binom{\sum_s m^s-q}{2}/2^n.$
	\begin{proof}
		From the game RAND2, we have for $1 \leq s \leq q,$ $Y_j^s =C_{j+2}^s \oplus P_{j+2}^s,$ where $0 \leq j \leq m^s-3,$ and 
		\[Y_{m^s-2}^s =
		\begin{cases}
		\text{{\sf pad}}(P_{m^s}^s) \oplus D_{m^s}^s & \quad \text{if}\; s=enc,\\
		\text{{\sf pad}}(C_{m^s}^s) \oplus V_{m^s}^s & \quad \text{if}\; s=dec.
		\end{cases}
		\]
		Hence, there are $\sum_{s} m^s -q$ uniformly and independently generated $n$-bit strings. Hence, the Claim follows.
	\end{proof}
\end{claim}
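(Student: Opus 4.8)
The plan is to prove Claim~\ref{claim:9} by a plain birthday bound, once I have established that $\mathcal{R}_2$ is a collection of independent, uniformly distributed $n$-bit strings. Accordingly the argument splits into three parts: counting the entries of $\mathcal{R}_2$, showing each entry is uniform and that the family is jointly independent, and finishing with a union bound over pairs, each of which collides with probability exactly $1/2^n$.

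First I would count. For the $s^{\text{th}}$ query the loop in lines $112$--$116$ (resp.\ $212$--$216$) inserts $Y_0^s,\dots,Y_{m^s-3}^s$, that is $m^s-2$ strings, and lines $117$--$119$ (resp.\ $217$--$219$) insert the single extra string $Y_{m^s-2}^s$. Hence each query contributes $m^s-1$ elements, and the total size of $\mathcal{R}_2$ is $\sum_{s}(m^s-1)=\sum_s m^s-q$, which is exactly the $N$ appearing in the target bound $\binom{N}{2}/2^n$.

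Next comes the crux: uniformity and independence. For $0\le j\le m^s-3$ we have $Y_j^s=C_{j+2}^s\oplus P_{j+2}^s$, while the last entry is ${\sf pad}(P_{m^s}^s)\oplus D_{m^s}^s$ on an encryption query (resp.\ ${\sf pad}(C_{m^s}^s)\oplus V_{m^s}^s$ on a decryption query); since the queries are full-block, ${\sf pad}$ and ${\sf drop}$ act as the identity and $C_{m^s}^s=D_{m^s}^s$. In every case one operand is chosen by the adversary as part of the query, hence fixed before the corresponding response is drawn, while the other is one of the fresh uniform blocks sampled in line $12$ (or line $22$). Note that the first block of each response never enters $\mathcal{R}_2$, so exactly the remaining $m^s-1$ fresh blocks of the $s^{\text{th}}$ response are used, and distinct pairs $(s,j)$ draw on distinct fresh blocks. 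Since XORing a uniform block by a value chosen independently of it keeps it uniform, each $Y_j^s$ is uniform on $\nbits$.

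The hard part will be justifying joint independence under adaptive querying, because an offset used at query $s$ may depend on responses to earlier queries, which already contain some fresh blocks. I would resolve this by conditioning query by query: at the instant the $s^{\text{th}}$ response is sampled, its fresh blocks are independent of the entire transcript revealed so far, and that transcript fixes every earlier $Y$ as well as all the current offsets. Hence, conditioned on that transcript, the new entries $Y_0^s,\dots,Y_{m^s-2}^s$ are uniform and mutually independent; telescoping this conditioning across $s=1,\dots,q$ shows the whole family of $N=\sum_s m^s-q$ entries is independent and uniform. Any two distinct entries then collide with probability exactly $1/2^n$, and a union bound over the $\binom{N}{2}$ pairs yields $\Pr[{\sf COLLR}_2]\le\binom{\sum_s m^s-q}{2}/2^n$, as claimed.
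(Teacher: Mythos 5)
Your proposal is correct and follows essentially the same route as the paper's own proof: the paper likewise counts $\sum_s m^s-q$ entries in $\mathcal{R}_2$, observes they are uniform and independent $n$-bit strings (each being a fresh response block XORed with an adversary-chosen block), and applies the birthday bound. The only difference is that you spell out the adaptive-conditioning argument for joint independence, which the paper leaves implicit.
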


\begin{claim}\label{claim:10}
	$\Pr[{\sf COLLR}_3] \leq \frac{(q-1)\sigma_n}{2^n}.$\\
	Proof of the Claim is similar to the Claim \ref{claim:8}.
\end{claim}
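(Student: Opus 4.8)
The plan is to mirror the proof of Claim~\ref{claim:8} with the roles of encryption and decryption queries interchanged, since in game RAND2 the quantity $MM^s$ sits symmetrically to $CC^s$ (compare lines 111 and 206 with lines 106 and 211). First I would record how each $MM^s$ is produced. For a decryption query, $MM^s$ is sampled uniformly from $\nbits$ when $C_1^s$ is fresh (line 204) and is not otherwise inserted into $\mathcal{R}_3$; for an encryption query, combining lines 108 and 109 gives
\[
MM^s = CC^s \oplus H_{h_1}(P_2^s\|\dotso\|P_{m^s}^s, T^s) \oplus H_{h_2}(C_2^s\|\dotso\|C_{m^s}^s, T^s),
\]
where $CC^s$ is itself uniform (line 104) unless $P_1^s$ repeats a previous first block. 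I would then bound $\Pr[MM^s = MM^{s'}]$ for a fixed pair $s<s'$ by a case analysis on the query types.

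The easy cases parallel the opening of Claim~\ref{claim:8}. If at least one of the two queries, say the $s'$-th, is a decryption with a fresh first block, then $MM^{s'}$ is uniform and independent of $MM^s$, so $\Pr[MM^s = MM^{s'}] \leq 1/2^n$. The same bound holds when both queries are encryptions with $P_1^s \neq P_1^{s'}$, because then $CC^s$ and $CC^{s'}$ are independent uniform masks, making $MM^s \oplus MM^{s'}$ uniform regardless of the hash terms.

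The single delicate case is two encryption queries with $P_1^s = P_1^{s'}$, where line 102 forces $CC^s = CC^{s'}$. Here the masks cancel and
\[
MM^s \oplus MM^{s'} = \big[H_{h_1}(Q^s,T^s)\oplus H_{h_1}(Q^{s'},T^{s'})\big] \oplus \big[H_{h_2}(R^s,T^s)\oplus H_{h_2}(R^{s'},T^{s'})\big],
\]
with $Q^s = P_2^s\|\dotso\|P_{m^s}^s$ and $R^s = C_2^s\|\dotso\|C_{m^s}^s$. Exactly as in Claim~\ref{claim:8}, this is a non-zero bivariate polynomial in $(h_1,h_2)$ of total degree at most $\ell$, so Lemma~\ref{szl} yields $\Pr[MM^s = MM^{s'}] \leq \ell/2^n$. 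Summing the per-pair bound $\ell/2^n$ over all pairs and using $\sum_{1\leq s<s'\leq q}\ell \leq (q-1)\sigma_n$ (the same counting step as in Claim~\ref{claim:8}) gives $\Pr[{\sf COLLR}_3] \leq (q-1)\sigma_n/2^n$.

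I expect the main obstacle to be justifying that the difference polynomial above is genuinely non-zero. Since $H_{h_1}$ and $H_{h_2}$ involve disjoint indeterminates, the XOR vanishes as a formal polynomial only if each bracket vanishes separately, so I would need to argue that two distinct encryption queries sharing the first plaintext block still force a difference either in the $h_1$-part (from $(Q^s,T^s)\neq(Q^{s'},T^{s'})$ or a length mismatch contributing to the degree-one coefficient of the hash) or in the $h_2$-part (from the distinctness of the uniformly random ciphertext responses $R^s, R^{s'}$). This is precisely the point where Claim~\ref{claim:8} merely asserts non-zeroness, so I would invoke the same reasoning there rather than re-deriving it in full.
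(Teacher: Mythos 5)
Your proposal is correct and is exactly the argument the paper intends: it carries out the enc/dec role-swap of Claim~\ref{claim:8} (uniform $MM^{s'}$ for fresh decryption queries, independent uniform masks $CC^s$, $CC^{s'}$ for encryptions with distinct first blocks, and Schwartz--Zippel via Lemma~\ref{szl} on the non-zero bivariate polynomial in $(h_1,h_2)$ when $P_1^s = P_1^{s'}$ forces the masks to cancel), followed by the same per-pair summation $\sum_{s<s'}\ell/2^n \leq (q-1)\sigma_n/2^n$. Your closing remark on justifying non-zeroness of the difference polynomial is a point the paper itself leaves implicit in Claim~\ref{claim:8}, so no gap is introduced relative to the paper's own standard of rigor.
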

Now by using Claim \ref{claim:5} to \ref{claim:10} and equation \ref{15}, we get :

\begin{eqnarray}
\nonumber
\textbf{Adv}^{\pm rnd}_{\text{XCBv1[3Perm}(n)]}(A) &\leq& \frac{2}{2^n}\binom{q}{2} + \frac{\ell q \sigma_n w_{max}}{2^n}+ \frac{1}{2^n}\binom{\sum_{s=1}^q m^s-q}{2} + \frac{2(q-1)\sigma_n}{2^n}\\
&\leq& \frac{2\sigma_n^2}{2^n} + \frac{\ell q \sigma_n w_{max}}{2^n}.\label{36}
\end{eqnarray}

By using above bound, equation (\ref{idea}) and the fact $w_{max} \leq 2^5$ as stated in \cite{Iwata}. We get the bound of XCBv1 as stated in Theorem \ref{imp_xcbv1} i.e.
$$\textbf{Adv}^{\pm \widetilde{prp}}_{\text{XCBv1[3Perm}(n)]}(A) \leq  \frac{(3+2^5)\ell q\sigma_n}{2^n}.$$

Similarly, we can prove the bound for XCBv2 as stated in Theorem \ref{imp_xcbv2}.


\section{Security of MXCB}
In Claim \ref{claim:6}, if we change the {\sf Ctr} mode used in XCBv1 to
$$\text{{\sf Ctr}}_{K,S}(P_1, \dotsc, P_{m}) = (P_1 \oplus E_K(S \oplus 1),\dotsc, P_{m} \oplus E_K(S \oplus m)),$$ 
where $S$ is the counter and $K$ is the key, let's say this is ``new counter mode''. Thus, we have a modified XCBv1, say, MXCBv1. Then MXCBv1's security bound will be different from the original due to ``new counter mode''. Proof of the security bound of this MXCBv1 will be similar to the previous, but we notice the change in the security bound due to ``new counter mode''. All the games are similar as we have shown, in the original XCBv1 we only replace Counter mode to 	``new Counter mode''. The collision probability of all the multi sets will be same except $\mathcal{D}_2$. Let us rename ${\sf COLLD}_2$ as ${\sf COLLD}_{\sf mod}$ which we calculate as follows:

\begin{claim}\label{claim:6a}
	$\Pr[{\sf COLLD}_{\sf mod}] \leq \binom{\sigma_n - q}{2}\frac{1}{2^n}.$
	\begin{proof}
		$\mathcal{D}_2 = Z_1 \cup Z_2 \cup \dotso \cup Z_q $, where\\
		\qquad$Z_s = \{S^s \oplus {\sf bin}_n(j) : 1 \leq j \leq m^s-1\},$ for $1 \leq s \leq q$
		and 
		\[S^s =
		\begin{cases}
		CC^s \oplus H_{h_1}(P_2^s \| \dotso \| P_{m^s}^s, T^s) & \text{if}\; ty^s = enc,\\
		MM^s \oplus H_{h_2}(C_2^s \| \dots \|C_{m^s}^s, T^s) & \text{if}\; ty^s = dec.
		\end{cases}
		\]
		It is easy to see for $x_1, x_2  \in Z_s$ then $\Pr[x_1=x_2] =0$. So our main task is to figure out $\Pr[x_1=x_2]$ when $x_1 \in Z_s$ and $x_2 \in Z_{s'}$, where $s \neq s'.$
		$$\Pr[ {\sf COLLD_{mod}}] \leq \sum_{1 \leq s < s' \leq q} \Pr[{\sf COLL}(Z_s, Z_{s'})].$$
		In case of $ty^s= ty^{s'}= enc$ and $P_1^s \neq P_1^{s'}$, $CC^s$ is chosen randomly and independently, for any $i$ and $j$
		$$\Pr[S_i^s = S_j^{s'}] = \dfrac{1}{2^n}.$$
		Similarly, for the case $ty^s= ty^{s'}= dec$ and $C_1^s \neq C_1^{s'}$, for any $i$ and $j$
		$$\Pr[S_i^s = S_j^{s'}] = \dfrac{1}{2^n}.$$
		In rest of the case where $ty^s = enc$ and $ty^{s'} = dec$,
		for $i^{th}$ and $j^{th}$ block, where $1 \leq i \leq m^s-1$ and $1 \leq j \leq m^{s'}-1,$\\
		$Z_s^i = Z_{s'}^j$ implies\\ $${\sf bin}_n{(i)} \oplus CC^s \oplus H_{h_1}(P_2^s \| \dotso \| P_{m^s}^s, T^s) ={\sf bin}_n{(j)} \oplus MM^{s'} \oplus H_{h_2}(C_2^{s'} \| \dots \|C_{m^{s'}}^{s'}, T^{s'})$$
		Let $s \leq s'$ and $(s,i) \neq (s',j)$. Thus, either $CC^s$ (in case $s^{th}$ is encryption query) or $MM^s$ (in case $s'^{th}$ is decryption query) is uniformly and independently distributed with all other variables. Thus, collision probability is $1/2^n$.\\
		Hence, for $\sigma_n-q$ messages 
		$$\Pr[{\sf COLLD}_{\sf mod}] \leq \binom{\sigma_n - q}{2}\frac{1}{2^n}.$$
	\end{proof}
\end{claim}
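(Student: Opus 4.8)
The plan is to follow the skeleton of the proof of Claim~\ref{claim:6}, but to exploit the fact that the ``new counter mode'' replaces the nonlinear map $\text{inc}^j$ acting on $S^s$ by the affine shift $S^s \oplus {\sf bin}_n(j)$. This single algebraic change is exactly what removes the factor $w_{max}$ and the degree-$\ell$ dependence that appeared in Claim~\ref{claim:6}: a collision between two counter blocks now becomes a relation in which the shift terms ${\sf bin}_n(i)$ and ${\sf bin}_n(j)$ are fixed constants, so the relevant randomness enters directly through a uniformly sampled block rather than being mediated by the $\text{inc}$ function. I therefore expect each individual cross-query collision to occur with probability exactly $1/2^n$, and the target bound $\binom{\sigma_n - q}{2}/2^n$ to be simply the birthday bound for the total number of counter blocks generated.

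First I would dispose of the within-query case: for $x_1 = S^s \oplus {\sf bin}_n(i)$ and $x_2 = S^s \oplus {\sf bin}_n(i')$ in the same $Z_s$ with $i \neq i'$, we have $x_1 \oplus x_2 = {\sf bin}_n(i) \oplus {\sf bin}_n(i') \neq 0$, so $\Pr[x_1 = x_2] = 0$; hence only cross-query collisions contribute and $\Pr[{\sf COLLD}_{\sf mod}] \leq \sum_{1 \leq s < s' \leq q}\Pr[{\sf COLL}(Z_s, Z_{s'})]$. For a single pair $x_1 \in Z_s$, $x_2 \in Z_{s'}$, substituting the definition of $S^s$ turns the event $S^s \oplus {\sf bin}_n(i) = S^{s'} \oplus {\sf bin}_n(j)$ into a linear relation among the randomizing blocks ($CC$ for encryption queries, $MM$ for decryption queries), the hash values, and the fixed constants. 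In the enc--enc case with $P_1^s \neq P_1^{s'}$ the block $CC^s$ drawn at line 104 of RAND2 is uniform and independent of everything else in the relation, giving probability $1/2^n$; the dec--dec case with $C_1^s \neq C_1^{s'}$ is symmetric via $MM^s$ (line 204); and in the mixed case the fresh block $CC^s$ (when $s$ is an encryption query) or $MM^{s'}$ (when $s'$ is a decryption query) plays the same role, again yielding $1/2^n$.

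Finally I would count: each $Z_s$ contributes $m^s - 1$ blocks, so $\mathcal{D}_2$ receives $\sum_s (m^s - 1) = \sum_s m^s - q \leq \sigma_n - q$ blocks, and a union bound over all unordered pairs gives $\Pr[{\sf COLLD}_{\sf mod}] \leq \binom{\sigma_n - q}{2}/2^n$. The step I expect to be the main obstacle is guaranteeing that a genuinely fresh, independent randomizing block is always present in the collision relation, particularly in the mixed enc--dec case and whenever a first block repeats, so that $CC^s$ or $MM^{s'}$ is reused rather than freshly sampled; in such a reuse event the randomizer cancels and one is left with a difference of hash polynomials in the still-random key, which a priori only admits a Schwartz--Zippel bound of $\ell/2^n$ (Lemma~\ref{szl}) per pair. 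The delicate point is thus to verify that the affine-shift structure prevents the two hash polynomials from conspiring against the fresh randomness, so that the clean per-pair bound $1/2^n$ holds uniformly; this is exactly where the new counter mode is meant to improve on Claim~\ref{claim:6} by dispensing with the $w_{max}$ and degree-$\ell$ factors.
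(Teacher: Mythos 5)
Your proposal follows the paper's proof essentially step for step: the same decomposition $\mathcal{D}_2 = Z_1 \cup \dots \cup Z_q$, the same observation that within-query collisions are impossible because the shifts ${\sf bin}_n(i) \oplus {\sf bin}_n(j)$ are nonzero constants, the same three-case analysis (enc--enc, dec--dec, mixed) resolved by a freshly sampled $CC^s$ or $MM^{s'}$, and the same count of $\sum_s (m^s-1) = \sigma_n - q$ blocks feeding a birthday bound.

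However, the obstacle you flag at the end is a genuine gap, and it is worth being explicit that it is a gap in the paper's own proof as well, not something your route introduces. The paper's enc--enc case is argued only under the hypothesis $P_1^s \neq P_1^{s'}$ (and dec--dec only under $C_1^s \neq C_1^{s'}$); the case of repeated first blocks is silently dropped. If two distinct encryption queries share the same first block, then by lines 101--102 of game RAND2 we have $CC^s = CC^{s'}$, the randomizers cancel, and the collision event $S^s \oplus {\sf bin}_n(i) = S^{s'} \oplus {\sf bin}_n(j)$ reduces to $H_{h_1}(P_2^s \| \dots \| P_{m^s}^s, T^s) \oplus H_{h_1}(P_2^{s'} \| \dots \| P_{m^{s'}}^{s'}, T^{s'}) = {\sf bin}_n(i) \oplus {\sf bin}_n(j)$, a nonzero polynomial identity in $h_1$ of degree at most $\ell$. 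Schwartz--Zippel (Lemma \ref{szl}) then yields only $\ell/2^n$ per block pair, not $1/2^n$, and nothing about the affine counter structure prevents an adversary from making many encryption queries with identical first blocks (the same issue resurfaces in the mixed case when the decryption query's first ciphertext block repeats an earlier one, so that $MM^{s'}$ is reused rather than fresh). Consequently the stated bound $\binom{\sigma_n - q}{2}/2^n$ does not follow from this case analysis as it stands: a correct statement needs either an additional term of order $\ell/2^n$ times the number of repeated-first-block pairs (this is precisely the role Lemma \ref{lem:2} plays in Claim \ref{claim:6} for the original counter mode), or an explicit restriction excluding such query pairs. Your instinct that this is ``the delicate point'' is exactly right; neither your proposal nor the paper actually closes it.
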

\par Now we can calculate the security bound for the MXCBv1 which is similar to the XCBv1. Difference in the security bound is due to the Claim \ref{claim:6} which we will replace by the Claim \ref{claim:6a}. Thus, Except the Claim \ref{claim:6}, by using the Claim \ref{claim:5} to \ref{claim:6a}, we get:
\begin{eqnarray}
\nonumber
\textbf{Adv}^{\pm rnd}_{\text{MXCBv1}[3\text{Perm}(n)]}(A) &\leq& \frac{2}{2^n}\binom{q}{2} + \frac{2}{2^n}\binom{\sum_{s=1}^q m^s-q}{2} + \frac{2(q-1)\sigma_n}{2^n}\\\nonumber
&\leq& \frac{q^2}{2^n} + \frac{({\sigma_n-q})^2}{2^n} + \frac{2\sigma_nq}{2^n}\\\nonumber
&\leq & \frac{2q^2+\sigma_n^2}{2^n}.
\end{eqnarray}

By using above bound and equation (\ref{idea}). We get the bound of MXCBv1 i.e.
$$\textbf{Adv}^{\pm \widetilde{prp}}_{\text{MXCBv1}[3\text{Perm}(n)]}(A) \leq  \frac{2.5 q^2+\sigma_n^2}{2^n}.$$

As we stated earlier, similarly we can state the security bound for the MXCBv2 (modified XCBv2). Thus, we can state the security bound of MXCBv1 and MXCBv2fb. For MXCBv1, security bound is as follows:
\begin{theorem}\label{imp_thm_xcbv1_mod}
	Consider an arbitrary adversary $A$ which queries only with messages/ciphers whose lengths are multiples of $n$ and $A$ asks a total of $q$ queries of overall query complexity $\sigma_n$ where each query is at most $\ell$ blocks long (each block of $n$ bits). Then,
	\begin{equation}\label{mxcbv1}
	{\bf Adv}^{\pm\nprp}_{\text{MXCBv1}[3\text{Perm}(n)]}(A) \leq \frac{2.5 q^2+\sigma_n^2}{2^n}.
	\end{equation}
\end{theorem}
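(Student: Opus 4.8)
The plan is to reuse, essentially verbatim, the game-playing scaffold already erected for Theorem~\ref{imp_thm_xcbv1}, exploiting the fact that MXCBv1 is obtained from XCBv1 by changing \emph{only} the counter mode. First I would invoke the same two-step reduction that culminates in equation~(\ref{idea}): it bounds the $\pm\nprp$ advantage by the corresponding $\pm rnd$ advantage plus the additive term $\binom{q}{2}/2^n$. This step is completely oblivious to how counter inputs are generated, so it transfers to MXCBv1 without any change. I would then replay the identical sequence of games XCB1, RAND1 and RAND2, arriving once more at $\textbf{Adv}^{\pm rnd}_{\text{MXCBv1}[3\text{Perm}(n)]}(A) \leq \Pr[A^{RAND2}\;\text{set}\;{\sf bad}]$ together with its decomposition~(\ref{18}) into the six collision events ${\sf COLLD}_i$ and ${\sf COLLR}_i$.

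The key observation is that five of those six probabilities are literally unchanged. The multisets $\mathcal{D}_1,\mathcal{D}_3,\mathcal{R}_1,\mathcal{R}_2,\mathcal{R}_3$ are filled with exactly the same quantities ($P_1^s$, $C_1^s$, $CC^s$, $Y_j^s$ and $MM^s$) as in XCBv1, none of which reference the internal structure of the counter; hence Claims~\ref{claim:5}, \ref{claim:7}, \ref{claim:8}, \ref{claim:9} and \ref{claim:10} carry over word for word. The \emph{only} set affected is $\mathcal{D}_2$, which now collects the values $S^s \oplus {\sf bin}_n(j)$ in place of $\text{inc}^j(S^s)$. Replacing Claim~\ref{claim:6} by Claim~\ref{claim:6a} is therefore the entire technical content that distinguishes this proof from that of Theorem~\ref{imp_thm_xcbv1}.

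The heart of the matter, and the step I expect to carry all the real weight, is the bound on ${\sf COLLD}_{\sf mod}$ in Claim~\ref{claim:6a}. The simplification relative to the inc-based counter arises because XOR with ${\sf bin}_n(j)$ is a fixed additive offset in $GF(2^n)$: a cross-query collision $S^s \oplus {\sf bin}_n(i) = S^{s'} \oplus {\sf bin}_n(j)$ collapses to $S^s \oplus S^{s'} = {\sf bin}_n(i) \oplus {\sf bin}_n(j)$, where at least one of the independent uniform masks $CC^s$ or $MM^{s'}$ survives on the left; this pins the per-pair probability at $1/2^n$, and summing over the $\binom{\sigma_n-q}{2}$ relevant pairs of counter blocks yields the clean $\binom{\sigma_n-q}{2}/2^n$. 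Crucially, this route never incurs the $w_{\max}$ or $\ell$ factors forced by the $\text{inc}$ analysis of Lemma~\ref{lem:2}, which is precisely why the final bound comes out quadratic in $q$ and $\sigma_n$ rather than of $\ell q \sigma_n$ type; I therefore anticipate the subtlety to lie not in any single calculation but in checking that the fresh-mask argument really applies uniformly across the encryption/encryption, decryption/decryption, and mixed cases.

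Finally I would assemble the pieces. Adding the five unchanged collision bounds to Claim~\ref{claim:6a}, and using $\sum_s m^s \leq \sigma_n$ to replace $\binom{\sum_s m^s - q}{2}$ by $\binom{\sigma_n-q}{2}$, gives $\textbf{Adv}^{\pm rnd}_{\text{MXCBv1}[3\text{Perm}(n)]}(A) \leq \frac{2}{2^n}\binom{q}{2} + \frac{2}{2^n}\binom{\sigma_n-q}{2} + \frac{2(q-1)\sigma_n}{2^n}$. The elementary estimates $\binom{q}{2}\leq q^2/2$, $\binom{\sigma_n-q}{2}\leq (\sigma_n-q)^2/2$ and the identity $(\sigma_n-q)^2 + 2\sigma_n q = \sigma_n^2 + q^2$ then collapse this to $(2q^2+\sigma_n^2)/2^n$. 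Folding in the $\binom{q}{2}/2^n \leq q^2/(2\cdot 2^n)$ contributed by~(\ref{idea}) produces the claimed $(2.5\,q^2+\sigma_n^2)/2^n$, completing the proof.
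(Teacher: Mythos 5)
Your proposal follows the paper's own proof essentially verbatim: the same reduction via~(\ref{idea}), the same game sequence XCB1/RAND1/RAND2 leading to the decomposition~(\ref{18}), the same observation that only $\mathcal{D}_2$ is affected so that Claim~\ref{claim:6} gets replaced by Claim~\ref{claim:6a} while Claims~\ref{claim:5}, \ref{claim:7}, \ref{claim:8}, \ref{claim:9} and \ref{claim:10} carry over, and the same final arithmetic $(2q^2+\sigma_n^2)/2^n + \binom{q}{2}/2^n \leq (2.5\,q^2+\sigma_n^2)/2^n$. The case check you flag (that a fresh uniform mask $CC^s$ or $MM^s$ survives in the enc/enc, dec/dec, and mixed cases) is precisely how the paper argues Claim~\ref{claim:6a}, so nothing further is required.
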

Similarly, for  MXCBv2 our security bound is given as:
\begin{theorem}\label{mxcbv2}
	Consider an arbitrary adversary $A$ which queries only with messages/ciphers whose lengths are multiples of $n$ and $A$ asks a total of $q$ queries of overall query complexity $\sigma_n$ where each query is at most $\ell$ blocks long (each block of $n$ bits). Then,
	\begin{equation}\label{imp_xcbv2_mod}
	{\bf Adv}^{\pm\nprp}_{\text{MXCBv2fb}[3\text{Perm}(n)]}(A) \leq \frac{3.5 q^2+\sigma_n^2}{2^n}.
	\end{equation}
\end{theorem}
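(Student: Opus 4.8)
The plan is to replay the game-based argument established for MXCBv1 (Theorem~\ref{imp_thm_xcbv1_mod}), now instantiated with the XCBv2 encryption/decryption of \autoref{fig : XCBv1_v2} and with the ordinary counter mode replaced throughout by the ``new counter mode'' $\text{{\sf Ctr}}_{K,S}(P_1,\dotsc,P_m)=(P_1\oplus E_K(S\oplus 1),\dotsc,P_m\oplus E_K(S\oplus m))$. First I would replace the three block ciphers of XCBv2 by three independent random permutations on $\nbits$, reduce $\textbf{Adv}^{\pm\nprp}$ to $\textbf{Adv}^{\pm rnd}$ exactly as in~(\ref{idea}) (absorbing the additive $\binom{q}{2}/2^n$ from the random-permutation-versus-random-function step), and then walk through the analogous chain of games, ending at the RAND2-type game whose advantage equals the probability that its ${\sf bad}$ flag is set. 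As for MXCBv1, this reduces the whole problem to bounding the collision probabilities in the six multisets $\mathcal{D}_1,\mathcal{D}_2,\mathcal{D}_3,\mathcal{R}_1,\mathcal{R}_2,\mathcal{R}_3$ that the finalization step populates.

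Next I would bound those six events. The counter-mode modification only touches $\mathcal{D}_2$: with the new counter the per-query set becomes $Z_s=\{S^s\oplus{\sf bin}_n(j)\}$, so the $w_{\max}$/Schwartz--Zippel estimate of the original proof collapses to the elementary counting argument of Claim~\ref{claim:6a}, giving $\Pr[{\sf COLLD}_{\sf mod}]\le\binom{\sigma_n-q}{2}/2^n$ and, crucially, removing the $w_{\max}\ell q\sigma_n$ term. The block-level range collision ${\sf COLLR}_2$ remains a collision among $\sum_s m^s-q$ fresh uniform strings, bounded by $\binom{\sum_s m^s-q}{2}/2^n$ as in Claim~\ref{claim:9}, and the two first/last-block domain collisions behave as in Claims~\ref{claim:5} and~\ref{claim:7}, each at most $\binom{q}{2}/2^n$. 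After collecting these with the two range collisions ${\sf COLLR}_1,{\sf COLLR}_3$, I would simplify using $\binom{q}{2}\le q^2/2$, $\binom{\sigma_n-q}{2}\le\sigma_n^2/2$ and $\sum_s m^s\le\sigma_n$, regroup into the shape $(c\,q^2+\sigma_n^2)/2^n$, and finally re-add the $\binom{q}{2}/2^n$ supplied by~(\ref{idea}) to reach~(\ref{imp_xcbv2_mod}).

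The main obstacle is the treatment of ${\sf COLLR}_1$ and ${\sf COLLR}_3$, and it is precisely here that MXCBv2 parts company with MXCBv1: XCBv2 uses a single hash key $h$ in both hash invocations (lines~108 and~110, with padding, the reversed $0^n\|T$ versus $T\|0^n$ framing, and the length-encoding block), whereas XCBv1 used independent $h_1,h_2$. Consequently the cross-query difference $CC^s\oplus CC^{s'}$ (and likewise $MM^s\oplus MM^{s'}$), which in Claim~\ref{claim:8} was a non-zero \emph{bivariate} polynomial in $(h_1,h_2)$, now becomes a \emph{univariate} polynomial in $h$; the delicate step is to verify that this polynomial cannot vanish identically, i.e. that the padding and the distinct $0^n$ placements in the two hash calls prevent an unconditional cancellation, so that Schwartz--Zippel (Lemma~\ref{szl}) still yields an $\ell/2^n$ bound per pair and hence $(q-1)\sigma_n/2^n$ after summation (as in Claims~\ref{claim:8} and~\ref{claim:10}). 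This single-key coupling is exactly what lifts the constant from $2.5$ to $3.5$ -- the MXCB analogue of the $3\to5$ gap between the original XCBv1 and XCBv2 bounds -- and once the non-degeneracy is confirmed the arithmetic closes to $\dfrac{3.5\,q^2+\sigma_n^2}{2^n}$, establishing~(\ref{imp_xcbv2_mod}).
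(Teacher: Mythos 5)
Your overall skeleton is the intended route: the paper itself derives only the MXCBv1 bound in detail and then asserts the v2 case ``similarly,'' so replaying the games with XCBv2fb, reducing via~(\ref{idea}), and re-examining the six multisets (with the new counter mode collapsing the $\mathcal{D}_2$ analysis to Claim~\ref{claim:6a}) is exactly the right thing to attempt. The problem is that your handling of the single-hash-key step contains a genuine gap, and it is visible in your own arithmetic: the six bounds you actually commit to --- $\binom{q}{2}/2^n$ for each of $\mathcal{D}_1,\mathcal{D}_3$, $\binom{\sigma_n-q}{2}/2^n$ for $\mathcal{D}_2$, $\binom{\sum_s m^s-q}{2}/2^n$ for $\mathcal{R}_2$, and $(q-1)\sigma_n/2^n$ for each of $\mathcal{R}_1,\mathcal{R}_3$ --- sum, together with the $\binom{q}{2}/2^n$ from~(\ref{idea}), to $(2.5\,q^2+\sigma_n^2)/2^n$, i.e.\ the MXCBv1 bound, not the claimed $(3.5\,q^2+\sigma_n^2)/2^n$. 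You assert that the single-key coupling ``lifts the constant from $2.5$ to $3.5$,'' but no term in your accounting realizes that lift; as written, your argument would prove a bound the theorem does not claim and leave the claimed bound unproved.

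The missing idea is that the non-degeneracy you propose to ``confirm'' is not a fact one can verify unconditionally --- it can fail, and the failure probability is precisely the extra $q^2/2^n$. In the analogue of Claim~\ref{claim:8}, Case II, with a single key the quantity $CC^s\oplus CC^{s'}$ becomes $\bigl[H_h(g^s)\oplus H_h(g^{s'})\bigr]\oplus\bigl[H_h(f^s)\oplus H_h(f^{s'})\bigr]$, a univariate polynomial in $h$ whose coefficients are XORs of aligned plaintext-side and ciphertext-side blocks (plus padding and length blocks). In the bivariate case of Claim~\ref{claim:8}, identical vanishing is structurally impossible: it would force each component polynomial to vanish, hence equal hash inputs, a contradiction. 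In the univariate case, cross-cancellation between the $f$-part and the $g$-part is possible whenever the monomial degrees line up (e.g.\ equal tweak and message lengths); the different framings $0^n\|T$ versus $T\|0^n$ do not exclude it. The correct treatment is to bound the probability of this cancellation over the randomness of game RAND2 --- each coefficient involves uniformly distributed response blocks, so the event costs at most $1/2^n$ per pair of queries --- and to add roughly $\binom{q}{2}/2^n$ to each of ${\sf COLLR}_1$ and ${\sf COLLR}_3$ before applying Schwartz--Zippel to the (now generically nonzero) polynomial. Those two extra terms are exactly the missing $q^2/2^n$ that turns $2.5$ into $3.5$, and they are the MXCB analogue of the $3\to 5$ gap between the repaired XCBv1 and XCBv2fb bounds that you correctly point to but never convert into a concrete bad event.
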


\section{Comparison of TES Security bounds}
In this section, we compare our derived security bound with the bound compared in \cite{chakraborty} with the same practical values of the parameter as taken in \cite{chakraborty}. So, Available ciphertext/plaintext to adversary is $2^{42}$. Also, the block length is 16 bytes. So, total ciphertext/plaintext is $2^{42}/2^4 = 2^{38}$ blocks. And sector size is 4 KB i.e. $2^{12}$ bytes. Thus, we take message length $2^{12}/2^4 = 2^8$ blocks. Therefore, we have total number of message is $2^{38}/2^8  = 2^{30}$. So, the total query complexity of the adversary is  $2^{38}+2^{30} = 2^{38.006}$, where $2^{30}$ is the tweak for adversary.

\par Hence, $q$ is number of queries i.e. $2^{30}$, $\ell$ is maximum query length i.e. $2^8+1$ and $\sigma_n$ query complexity i.e $2^{38.006}.$

\begin{table}\centering 
	\renewcommand*{\arraystretch}{2.5}
	{\rowcolors{3}{black!10!black!10}{white!70!white!40}
		\begin{tabular}{ |P{2.7cm}|P{3cm}|P{3cm}|P{3cm}|}
			
			\hline
			\multicolumn{4}{|c|}{\textbf{List of some TES}} \\
			\hline
			\centering{TES mode} & Source & Claimed Bound & Numerical Value\\
			\hline
			
			TET &	\cite{halevi2007invertible}	 
			& $\dfrac{3 \sigma_n^2}{2 \phi (2^n-1)}$ &  $2^{-50.40}$\\
			
			HCTR  & \cite{chakraborty2008improved} & $\dfrac{4.5 \sigma_n^2}{2^n}$ &  $2^{-49.81}$\\
			
			CMC   & \cite{halevi2003tweakable}   &	$\dfrac{7 \sigma_n^2}{2^n}$ &   $2^{-49.18}$\\
			
			EME   &  \cite{halevi2004parallelizable} & $\dfrac{7 \sigma_n^2}{2^n}$   & $2^{-49.18}$\\
			
			HEH, HMCH   &  \cite{sarkar2009efficient} & $\dfrac{20 \sigma_n^2}{2^n}$ & $2^{-47.66}$\\
			
			XCB   & \cite{mcgrew2007security} & $\dfrac{8 q^2 (\ell +2)^2}{2^n}$ & $2^{-48.96}$\\
			
			XCBv2fb	& \cite{chakraborty}  & 
			$\dfrac{(5+2^{22})\ell q \sigma_n}{2^n}$   &$2^{-29.98}$\\
			
			XCBv1   &  \cite{chakraborty}  			& $\dfrac{(5+2^{22})\ell q \sigma_n}{2^n}$   &$2^{-29.98}$\\
			
			Repaired XCBv2fb  & 	This Dissertation 					& 	$\dfrac{(5+2^{5})\ell q \sigma_n}{2^n}$ & $2^{-46.78}$\\
			Repaired XCBv1   & 	This Dissertation 					& 	$\dfrac{(3+2^{5})\ell q \sigma_n}{2^n}$ & $2^{-46.87}$\\
			
			MXCBv2fb 	& This Dissertation				& 	$\dfrac{3.5 q^2 + \sigma_n^2}{2^n}$ & $2^{-51.99}$\\
			
			MXCBv1  	& This Dissertation			& 	
			$\dfrac{2.5 q^2 + \sigma_n^2}{2^n}$&$2^{-51.99}$\\
			\hline
		\end{tabular}\caption{Comparison of the Bounds, here $\phi$ is Euler's totient.}
		\label{table:1}}
\end{table}

\par In the Table \ref{table:1}, we can see even repaired XCB gives the worst security bound as compare to listed TES scheme while MXCB gives the best bound.

\section{Weak keys analysis of XCB}	

In 2008, Handschuh and Preneel \cite{handschuh2008key} gave the following definition of weak keys:
\blockquote{
	In symmetric cryptology, a class of keys is called a weak key class if for the members of that class the algorithm behaves in an unexpected way and if it is easy to detect whether a particular unknown key belongs to this class. For a MAC algorithm, the unexpected behaviour can be that the forgery probability for this key is substantially larger than average. Moreover, if a weak key class is of size C, one requires that identifying that a key belongs to this class requires testing fewer than C keys by exhaustive search and fewer than C verification queries.}
\par According to this definition, a lot of work has been proposed by Handschuh and Preenel and Saarinen in MAC, AE based on polynomial hash function. Initially, Handschuh and Preenel considered 0 as the only weak key for all polynomial hash function. Later, in 2012 Marakku-Juhani \cite{saarinen2012cycling}, in 2015 Gorden Procter et al.\cite{procter2015weak} increased the number of weak keys for GCM. 

\subsection{Saarinen's Cycling attacks}
\par Saarinen's cycling attacks was proposed in 2012 in \cite{saarinen2012cycling} against GCM and other polynomial-based MACs and hashes. The main idea is, if a hash key $h$ lies in a subgroup of order $r$, then $h^r = 1$. Hence, for any $i, j$ message block $M_i$ and $M_{i+jr}$ can be swapped without changing the value of hash function. Saarinen also talk about the specific bit swapping instead of whole block swapping while the other condition of $M_i$ and $M_{i+jr}$ will remain same. The forgery technique is successful if the hash key is an element of low order subgroup with order dividing the distance between the swapped message space. This method identifies whether the hash key is in that class or not, by one valid message, tag pair and single verification query. Saarinen observes that any $r$ that divides $2^{128}-1$ can be used and that swapping of $M_i$ and $M_{i+r}$ will give successful forgery with probability at least $\dfrac{r+1}{2^{128}}$.  

\subsection{Saarinen's Cycling attacks on XCB}
\par As XCB contain hash function, a relevant question which arise about XCB - Is it contains weak keys? Some of weak keys attack on TES are present in \cite{sun2015weak}, here we will present how Saarinen's cycling attack (2012) which was done against GCM and hashes, can be done in XCBv1 and XCBv2. 
\par In XCBv2, line 108 in \autoref{fig : XCBv1_v2}, if a hash key $h$ is of order $t$, then $h^t=1$. Therefore, we can swap a plain text block $P_i$ and $P_{i+jt}$ (for some $i, j$) without changing the original value of the hash function.

$$108.\quad S \la CC \oplus H_{h}(0^n\|T, P_1\|\dotso\|P_{m-2}\|\text{{\sf pad}}(P_{m-1})\|0^n).$$

\par For example, if $h^3 =1$ then we can swap plaintext $P_i$ and $P_{i+3j}$, where $1\le i,i+3j \le m-1$ hence the plaintext $P$ will change but not the hash function value. Therefore, In line 109, counter mode will be same after swapping the plaintext block. Before swapping, let ciphertext $C$, plaintext $P$ and counter mode ${\sf Ctr}_{K_c,S}$ and after swapping ciphertext $C'$, plaintext $P'$ and counter mode will be same i.e. ${\sf Ctr}_{K_c,S}$. Now as we know the plaintext and ciphertext before swapping, therefore we can determine the Counter mode from line 109 i.e   $C \oplus P =  {\sf Ctr}_{K_c,S}$ (here $\oplus$ is block wise XOR, as per XCB scheme). As counter mode before and after swapping is same, so we can easily figure out the $C'$ i.e. $C' = P' \oplus {\sf Ctr}_{K_c,S}$. Thus, the cycling attack took place.

$$109.\quad (C_1, \dotsc, C_{m-1} )\la {\sf Ctr}_{K_c,S}(P_1, \dotsc, P_{m-2}, P_{m-1}).$$

\par Similarly, we can perform the weak keys attack on XCBv1.


\chapter{Security of HCTR}\label{chap : HCTR}

HCTR was proposed by Wang, Feng and Wu in 2005. It is a mode of operation which provides a tweakable strong pseudorandom permutation \cite{wang2005hctr}.

\par In this chapter,  we show how the hash function is insecure. We perform distinguishing attack and the hash key recovery attack on HCTR. Next we analyse the dependency of the two different keys in HCTR. In particular, we analyse the following scenario. Suppose HCTR with keys $K$ and $h$ has been used for some time, and $K$ gets compromised. We show that only changing $K$ would rise to a completely insecure scheme. 

\section{Description of HCTR}

\par HCTR comprises a block cipher $E$ and hash function $H$. It is a Tweakable Encipher Scheme with two master key : hash key $h$ and block cipher key $K$. HCTR and XCB have the similar structure i.e. these are hash-counter-hash Tweakable Enciphering Scheme. But HCTR's hash function and ${\sf Ctr}$ mode definition is different from the XCB.

Following is the definition of the HCTR as defined in \cite{wang2005hctr}:

\par \textbf{Polynomial hash function in HCTR :} Let $P$ be the plaintext such that $|P| = n(m-1)+r$ for $1 \leq r \leq n$. Partition of $P$ into $P_1||P_2||\dots||P_m$, where $|P_i| = n$ for $1 \leq i \leq m-1$, $0^{n-r}$ will append at the end of $P$ to complete the block such that all block size will be $n$.
$$ H_h(P) : \{0,1\}^n \times \{0,1\}^* \rightarrow \{0,1\}^n,$$
and the hash function $H$ is defined as
\[
H_h(P) = 
\begin{cases}
h, & \text{if $P$ is an empty string},\\
P_1h^{n+1} \oplus \dots \oplus (P_n||0^{n-r})h^{2}\oplus ((|P|)h), & \text{otherwise.}
\end{cases}
\]

\par \textbf{Counter mode in HCTR:} Given an $n$-bit string $S$, the counter mode {\sf Ctr} is defined as:
$\text{{\sf Ctr}}_{K,S}(A_1, \dotsc, A_{m}) = (A_1 \oplus E_K(S \oplus 1),\dotsc, A_m \oplus E_K(S \oplus m),$ where $S$ is the counter and $K$ is the key.

\par HCTR's encryption algorithm is shown in \autoref{fig:hctr} and construction in the \autoref{cons:hctr} with plaintext $|P|\ge n$ and tweak $|T| \ge 0.$\\

\begin{figure}[h]
	\begin{center}
		\begin{tabular}{|c|}
			\hline\\
			\begin{minipage}{370pt}
				Encryption under HCTR : $\textbf{E}_K^T(P)$\\\\
				1.  Partition $P$ into $P_1, \dots, P_m$\\
				2.  $CC \leftarrow P_1 \oplus H_h(P_2, \dots, P_m||T)$\\
				3.  $S \leftarrow CC \oplus E_K(CC) $\\
				4.  $C_2, \dots, C_m \leftarrow {\sf Ctr}_{K,S}(P_2, \dots, P_m)$\\
				5.  $C_1 \leftarrow E_K(CC) \oplus H_h(C_2, \dots, C_m ||T)$\\
				6.  \textbf{return} $(C_1, \dots, C_m)$\\
			\end{minipage}\\
			\hline
		\end{tabular}
	\end{center}
	\caption{\label{fig:hctr} Encryption using HCTR}
\end{figure}

\begin{figure}[t]
	\includegraphics[width=0.75\textwidth]{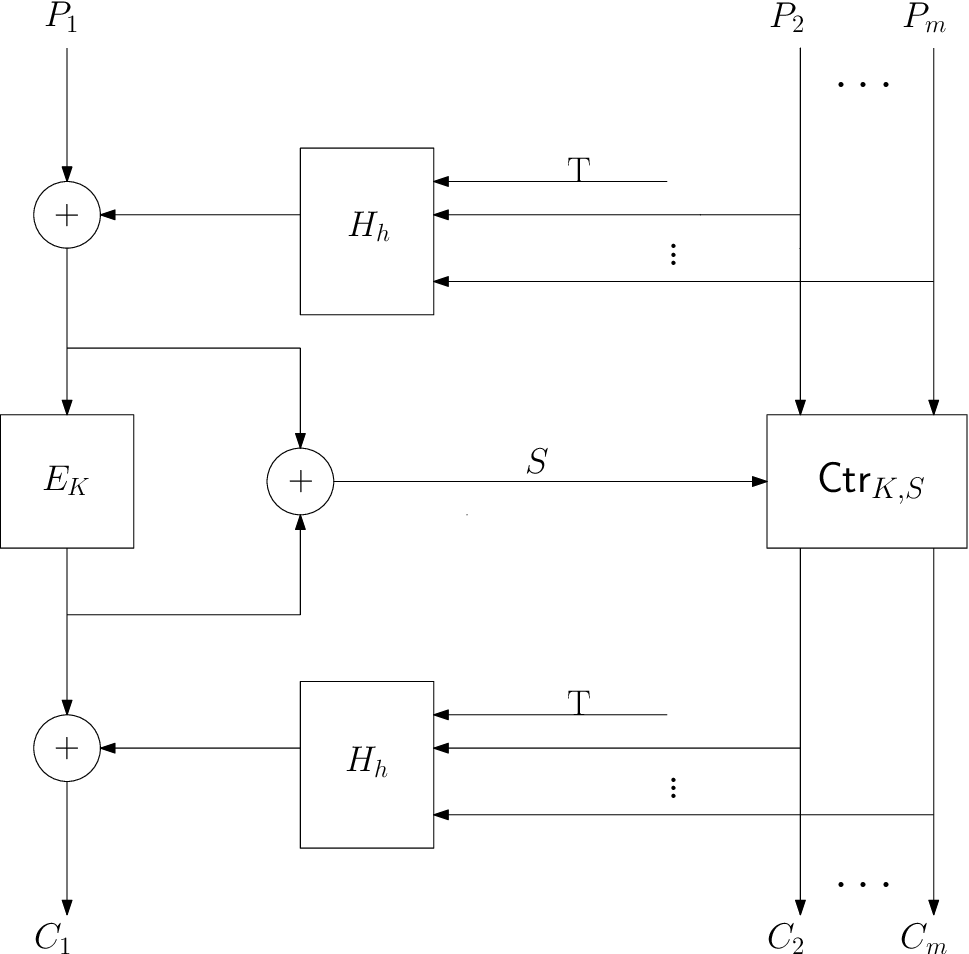}
	\centering
	\caption{\label{cons:hctr} Encryption of HCTR}
\end{figure}

\section{Insecurity of the hash function}
\par In 2005, the author showed a cubic security bound for HCTR in \cite{wang2005hctr}. Later in 2008, Chakraborty and Nandi gave the quadratic Security bound of HCTR \cite{chakraborty2008improved} which showed distinguishing advantage of an adversary in distinguishing HCTR and its inverse from a random permutation and its inverse is bounded above by $\frac{4.5 \sigma^2}{2^n}$ where $n$ the block-length of the block-cipher and $\sigma$ is the number of $n$-block queries made by the adversary (including the tweak). In this section, we show how the above claim is contradictory.  
\subsection{The Case of Empty Message}
\par Here, we give distinguish attack on the HCTR. In distinguish attack on the construction of HCTR, an adversary distinguish between the scheme and random permutation with high probability. Distinguish attack on HCTR which we show here generate the same internal value (i.e. $CC$) after performing hash function for two different messages. In HCTR's hash function if $P$ is empty string (as an input of hash function) or $P = 0$ (only single 0 as an input of hash function) then the $H_h(P)$ return the same counter value i.e. $h$, which is collision in hash function. Below we explain how distinguish attack can be performed on HCTR. 

\par Suppose an adversary makes two queries $P^{(1)}$ and $P^{(2)}$ such that $C^{(1)} = \text{HCTR}_{K,h}(P^{(1)})$ and $C^{(2)} = \text{HCTR}_{K,h}(P^{(2)})$ for empty tweak. Assume that $P^{(1)} = x$ and $P^{(2)} = (x\|0)$ where $x \in \nbits$ is arbitrary. Now,
\begin{enumerate}
	\item Due to collision in hash function for empty input and $0$ input, internal value $CC^{(1)}$ and $CC^{(2)}$ have the same value i.e. $x \oplus h.$ And so $E_K(CC^{(1)}) = E_K(CC^{(2)})$ , say, $E_K(CC)$.
	\item If $C_2^{(2)}$ has output $0$ then $C_1^{(1)} = C_1^{(2)}$ i.e. $E_K(CC) \oplus h$.
	\item Also, $\Pr[C_2^{(2)} = 0] = 1/2.$
\end{enumerate}
Thus, the advantage of the adversary is $(\frac{1}{2}-\frac{1}{2^n})$ which is very high. Therefore, the bound proved in \cite{chakraborty2008improved} is contradictory.

\par Not only distinguish attack, adversary can perform hash key recovery attack.  If we make the same set-up as above and $C_2^{(2)}$ is $1$ instead of $0$ then
\begin{eqnarray}\nonumber
C_1^{(1)} \oplus C_1^{(2)} &=& E_K(CC) \oplus h \oplus E_K(CC) \oplus h^2 \oplus h\\\nonumber
& = & h^2.
\end{eqnarray}
After getting $h^2$, adversary can easily retrieve the hash key $h$. Therefore, in $k$ iteration where $k \ge 1$, adversary have a high probability of retrieving hash key $h$ i.e.
$$\Pr\bigg[C_2^{(2)}=1\bigg] = \bigg(1 - \frac{1}{2^k}\bigg).$$ 

\par Thus, in the above discussion adversary not only perform the distinguishing attack but also retrieve the hash key $h$ with almost surety from HCTR scheme.

\subsection{Comment about the Attack}
In above two attacks, the weak point is definition of the hash function. Therefore, we can avoid the above attacks by putting some restriction either on the input query or hash function, or both. Here are the following way which can help us to avoid these attacks:
\begin{enumerate}
	\item We can avoid these attacks if we do not define the hash function for empty string and exclude  message of length $n$-bits or less from the message space of HCTR i.e. $|P|> n.$
	\item Above technique put restriction on the message. We can prevent the above attack without changing the message size of HCTR just by modifying the definition of hash function i.e increase the size of the input by one. So, the new hash $H^{'}$ is defined as: $H_h^{'}(P) = H_h(P||1).$
\end{enumerate}

\section {Key Dependency in HCTR}

\par In definition of HCTR, we didn't talk about the two master keys whether they have some relation or not. Suppose in an implementation, Two master keys $h_1$ (hash key) and $K_1$ (block cipher key) were used and encryption, decryption took place. Suppose after some time, block key $K_1$ gets compromised and it changed from  $K_1$ to $K_2$. Also, we have sufficient amount of data available corresponding to the old key pair. Here, we show how one can extract the hash key i.e. $h_1$ with certainty and the whole scheme is compromised i.e. hash key recovery attack is possible. Thus, the scheme will not be secure any-more. 

\par Suppose an adversary has a plaintext ciphertext pair $(P, C)$ such that $C = \text{HCTR}_{K_1,h_1}(P)$ for empty tweak. Assume that $P = x\|x$ where $x \in \nbits$ is arbitrary. Now,
\begin{equation}\label{hctr:kd1}
C_1 = E_{K_1}(CC) \oplus H_{h_1}(C_2),
\end{equation}

where $CC = x \oplus H_{h_1}(x).$  

$$C_2 = E_{K_1}(S \oplus {\sf bin}_n{(1)}) \oplus x,$$ 
Here,
\begin{eqnarray}\nonumber
S  &=& CC \oplus E_{K_1}(CC)\\
&=& x \oplus H_{h_1}(x) \oplus E_{K_1}(CC).\label{hctr:kd2}
\end{eqnarray}

Now, as we know $C_2, x$ and $K_1$, we have 
\begin{equation}\label{hctr:kd3}
S = E_{K_1}^{-1}(C_2 \oplus x) \oplus {\sf bin}_n(1).
\end{equation}

By using equations (\ref{hctr:kd1}), (\ref{hctr:kd2}) and \ref{hctr:kd3}, we get
\begin{eqnarray}\nonumber
C_1 \oplus S \oplus x &=& E_{K_1}(CC) \oplus H_{h_1}(C_2) \oplus x \oplus H_{h_1}(x) \oplus E_{K_1}(CC) \oplus x,\\
&=& (x \oplus C_2){h_1}^2.\label{hctr:kd4}
\end{eqnarray}

Here equation (\ref{hctr:kd4}) is a quadratic equation in $h_1$ which we can solve easily and retrieve the hash key $h_1$.  

\par To prevent from this attack, we should change both the keys simultaneously. Note we did not comment on changing only the hash key and keeping block cipher key as it is. 

\chapter{Conclusion}\label{chap : Conclusion}
	In this dissertation, we improved the security bound of a Tweakable Enciphering Scheme (TES) known as XCB (Extended Code Book). Also, we proposed a Modified XCB (MXCB) and showed that the security bound of MXCB has better numerical value than many other popular TES like HCTR, HCH, TET, HEH, CMC, XCB and EME. We also analysed some weak keys attack on XCB.
	 
	\par Further, we analysed a TES known as HCTR. We performed distinguishing and key recovery attack on the existing HCTR and also showed how it can be avoided easily. We also showed why both the master keys of HCTR should be changed simultaneously or otherwise it could be a serious attack on the construction of HCTR.

\bibliographystyle{plain}

\bibliography{bibaana}

\begin{thebibliography}{10}

\bibitem{brassard1983computationally}
Gilles Brassard.
\newblock On computationally secure authentication tags requiring short secret
  shared keys.
\newblock In {\em Advances in Cryptology}, pages 79--86. Springer, 1983.

\bibitem{chakraborty}
Debrup Chakraborty, Vicente Hernandez-Jimenez, and Palash Sarkar.
\newblock Another look at xcb.
\newblock {\em Cryptography and Communications}, 7(4):439--468, 2015.

\bibitem{chakraborty2008improved}
Debrup Chakraborty and Mridul Nandi.
\newblock An improved security bound for hctr.
\newblock In {\em International Workshop on Fast Software Encryption}, pages
  289--302. Springer, 2008.

\bibitem{chakraborty2006hch}
Debrup Chakraborty and Palash Sarkar.
\newblock Hch: A new tweakable enciphering scheme using the hash-encrypt-hash
  approach.
\newblock In {\em Indocrypt}, volume 4329, pages 287--302. Springer, 2006.

\bibitem{chakraborty2006new}
Debrup Chakraborty and Palash Sarkar.
\newblock A new mode of encryption providing a tweakable strong pseudo-random
  permutation.
\newblock In {\em International Workshop on Fast Software Encryption}, pages
  293--309. Springer, 2006.

\bibitem{halevi2007invertible}
Shai Halevi.
\newblock Invertible universal hashing and the tet encryption mode.
\newblock In {\em Annual International Cryptology Conference}, pages 412--429.
  Springer, 2007.

\bibitem{halevi2003tweakable}
Shai Halevi and Phillip Rogaway.
\newblock A tweakable enciphering mode.
\newblock In {\em Annual International Cryptology Conference}, pages 482--499.
  Springer, 2003.

\bibitem{halevi2004parallelizable}
Shai Halevi and Phillip Rogaway.
\newblock A parallelizable enciphering mode.
\newblock In {\em Cryptographers’ Track at the RSA Conference}, pages
  292--304. Springer, 2004.

\bibitem{handschuh2008key}
Helena Handschuh and Bart Preneel.
\newblock Key-recovery attacks on universal hash function based mac algorithms.
\newblock In {\em Annual International Cryptology Conference}, pages 144--161.
  Springer, 2008.

\bibitem{iwata2012breaking}
Tetsu Iwata, Keisuke Ohashi, and Kazuhiko Minematsu.
\newblock Breaking and repairing gcm security proofs.
\newblock In {\em Advances in Cryptology--CRYPTO 2012}, pages 31--49. Springer,
  2012.

\bibitem{mcgrew2004extended}
David~A McGrew and Scott~R Fluhrer.
\newblock The extended codebook (xcb) mode of operation.
\newblock {\em IACR Cryptology ePrint Archive}, 2004:278, 2004.

\bibitem{mcgrew2007security}
David~A McGrew and Scott~R Fluhrer.
\newblock The security of the extended codebook (xcb) mode of operation.
\newblock In {\em Selected Areas in Cryptography}, volume 4876, pages 311--327.
  Springer, 2007.

\bibitem{mcgrew2004security}
David~A McGrew and John Viega.
\newblock The security and performance of the galois/counter mode (gcm) of
  operation.
\newblock In {\em International Conference on Cryptology in India}, pages
  343--355. Springer, 2004.

\bibitem{rajeev}
Rajeev Motwani and Prbhakar Raghavan.
\newblock In {\em Randomized Algorithm}. Cambridge University Press, 2007.

\bibitem{Iwata}
Yuichi Niwa, Keisuke Ohashi, Kazuhiko Minematsu, and Tetsu Iwata.
\newblock Gcm security bounds reconsidered.
\newblock In {\em International Workshop on Fast Software Encryption}, pages
  385--407. Springer, 2015.

\bibitem{procter2015weak}
Gordon Procter and Carlos Cid.
\newblock On weak keys and forgery attacks against polynomial-based mac
  schemes.
\newblock {\em Journal of Cryptology}, 28(4):769--795, 2015.

\bibitem{saarinen2012cycling}
Markku-Juhani~Olavi Saarinen.
\newblock Cycling attacks on gcm, ghash and other polynomial macs and hashes.
\newblock In {\em Fast Software Encryption}, pages 216--225. Springer, 2012.

\bibitem{sarkar2009efficient}
Palash Sarkar.
\newblock Efficient tweakable enciphering schemes from (block-wise) universal
  hash functions.
\newblock {\em IEEE Transactions on Information Theory}, 55(10):4749--4760,
  2009.

\bibitem{sun2015weak}
Zhelei Sun, Peng Wang, and Liting Zhang.
\newblock Weak-key and related-key analysis of hash-counter-hash tweakable
  enciphering schemes.
\newblock In {\em Australasian Conference on Information Security and Privacy},
  pages 3--19. Springer, 2015.

\bibitem{wang2005hctr}
Peng Wang, Dengguo Feng, and Wenling Wu.
\newblock Hctr: A variable-input-length enciphering mode.
\newblock In {\em International Conference on Information Security and
  Cryptology}, pages 175--188. Springer, 2005.

\bibitem{wegman1981new}
Mark~N Wegman and J~Lawrence Carter.
\newblock New hash functions and their use in authentication and set equality.
\newblock {\em Journal of computer and system sciences}, 22(3):265--279, 1981.

\end{thebibliography}
\end{document}